\title[Locally Repairable Codes]{Good Locally Repairable Codes via Propagation Rules}
\author{Shu Liu}\address{Natl Key Lab Sci and Technol Commun, University of Electronic Science and Technology of China, Chengdu, China.} \email{shuliu@uestc.edu.cn}
\author{Liming Ma}\address{School of Mathematical Sciences, University of Science and Technology of China, Hefei 230026, China
}\email{lmma20@ustc.edu.cn}
\author{Tingyi Wu}\address{Theory Lab, Central Research Institute, 2012 Labs, Huawei Technology Co. Ltd.
}\email{wu.ting.yi@huawei.com}
\author{Chaoping Xing} \address{School of Electronics, Information and Electric Engineering, Shanghai Jiao Tong University,
China 200240}\email{xingcp@sjtu.edu.cn}
\date{}
\newtheorem{lemma}{Lemma}[section]
\newtheorem{theorem}[lemma]{Theorem}
\newtheorem{cor}[lemma]{Corollary}
\newtheorem{prop}[lemma]{Proposition}
\newtheorem{ex}[lemma]{Example}
\newtheorem{rem}[lemma]{Remark}
\newtheorem{defn}{Definition}
\theoremstyle{remark}
\newtheorem{rmk}{Remark}
\renewcommand{\epsilon}{\varepsilon}
\renewcommand{\le}{\leqslant}
\renewcommand{\ge}{\geqslant}
\newcommand{\vnote}[1]{}
\def\PP{\mathbb{P}}
\def\F{\mathbb{F}}
\def \mC {\mathcal{C}}
\def \mA {\mathcal{A}}
\def \mA {\mathcal{A}}
\def \mC {\mathcal{C}}
\def \mL {\mathcal{L}}
\def \mP {\mathcal{P}}
\def \Xi {{X^{[i]}}}
\newcommand{\Ga}{\alpha}
\newcommand{\Gd}{\delta}
\newcommand{\Gl}{\lambda}
\def \bc {{\bf c}}
\def \bh {{\bf h}}
\def \bx {{\bf x}}
\def \bu {{\bf u}}
\def \bo {{\bf 0}}
\def\supp {{\rm supp }}
\def\LRC {{\rm locally repairable code\ }}
\def\LRCs {{\rm locally repairable codes\ }}
\begin{document}
\maketitle

\begin{abstract} 
In classical coding theory, it is common to construct new codes via propagation rules. There are various propagation rules to construct classical block codes. However, propagation rules have not been extensively explored for constructions of locally repairable codes. In this paper, we introduce a few propagation rules to construct good locally repairable codes. To our surprise, these simple propagation rules produce a few interesting results. Firstly, by concatenating a locally repairable code as an inner code with a classical block code as an outer code, we obtain quite a few dimension-optimal binary locally repairable codes. Secondly, from this concatenation, we explicitly build a family of locally repairable codes that exceeds the Zyablov-type bound. Thirdly, by a lengthening propagation rule that adds some rows and columns from a parity-check matrix of a given linear code, we are able to produce a family of  dimension-optimal binary \LRCs from the extended Hamming codes, and to convert a classical maximum distance separable (MDS) code into a Singleton-optimal locally repairable code. Furthermore, via the lengthening propagation rule, we greatly simplify the construction of a family of locally repairable codes in \cite[Theorem 5]{MX20} that breaks the asymptotic Gilbert-Varshamov bound. In addition, we make use of  three other propagation rules to produce more dimension-optimal binary locally repairable codes. Finally, one of phenomena that we observe in this paper is that some trivial propagation rules in classical block codes do not hold anymore for locally repairable codes.
\end{abstract}

\section{Introduction}
Since the first work on locally repairable codes in \cite{HL07,HCL07}, construction of good locally repairable codes has been a central problem in the study of locally repairable codes \cite{BTV17,BHHMV17,CMST20,FY14,GHSY12,GXY19,HXSCFY20,J19,LMX19,LXY19,PD14,SES19,SRKV13,TB14,TBF16,TPD16}. 
A locally repairable code is just a block code with an additional parameter called locality. 
It was shown in \cite{GHSY12} that the minimum distance $d(C)$ of an $[n,k,d]$-linear code $C$ with locality $r$ is upper bounded by
 \begin{equation}\label{eq:x1}
 d(C)\leq n-k-\left\lceil \frac kr\right\rceil+2.
 \end{equation}
The bound \eqref{eq:x1} is called the Singleton-type bound for locally repairable codes, and hence any code achieving this bound is called an optimal locally repairable code or a Singleton-optimal locally repairable code.

\subsection{Known Results}
The construction of optimal  locally repairable codes is of both theoretical interest and practical importance. A class of codes constructed earlier and known as pyramid codes \cite{HCL07} are shown to be optimal locally repairable codes.  In \cite{SRKV13},  Silberstein {\it et al.}  proposed a two-level construction based on the Gabidulin codes combined with a single parity-check $(r+1,r)$ code. Another construction \cite{TPD16} used two layers of MDS codes, a Reed-Solomon code and a special $(r+1,r)$ MDS code. A common shortcoming of these constructions is that the size of the code alphabet is an exponential function of the code length, and complicating the implementation. There  was an earlier construction of optimal locally repairable codes given in \cite{PKLK12} with  alphabet  size comparable to code length. However, the rate of the code is very close to $1$.

A remarkable construction of optimal locally repairable codes via subcodes of Reed-Solomon codes was given by Tamo and Barg in \cite{TB14}. There are some constraints on choice of locality due to the existence of good polynomials, and the length is upper bounded by the code alphabet size $q$. This construction was generalized via the automorphism group of a rational function field in \cite{JMX20} and it turns out that there are more flexibility on locality and the code length can be $q+1$. Based on the classical MDS conjecture, one should wonder if $q$-ary optimal locally repairable codes can have length bigger than $q+1$. Surprisingly, several optimal locally repairable codes with length around $q^2$ are constructed from algebraic surfaces in \cite{BHHMV17}. By making use of automorphism groups of elliptic curves, optimal locally repairable codes with length up to $q+2\sqrt{q}$ can be constructed \cite{LMX19,MX20}. Algebraic surfaces are employed to construct two families of optimal locally repairable codes \cite{SVV21}. In these constructions, there are no restriction on the minimum distance of optimal locally repairable codes.

As for optimal locally repairable codes with small distances, the length of optimal locally repairable codes could be much larger than the alphabet size $q$. 
Arbitrary long optimal locally repairable codes can be constructed by cyclic codes for distance $d=3, 4$ \cite{LXY19}, and the length of an optimal $q$-ary locally repairable code is at most roughly $O(dq^3)$ for distance $d\geq 5$ \cite{GXY19}. Jin provided an explicit construction of $q$-ary optimal locally repairable codes with length $O(q^2)$ via binary constant-weight codes for distance $5$ and $6$ \cite{J19}. From extreme graph theory, there exists explicit construction of optimal locally repairable codes with super-linear length for distance $d\geq 7$ \cite{XY19}. 
Over the past few years, locally repairable codes have been generalized to correct multiple erasure errors \cite{CMST20, CFXF19, CXHF18, FF20} and correct erasures from multiple recovery sets \cite{CMST20-4, JKZ20,SES19}.

The Singleton-type bound of locally repairable codes is not always tight. In \cite{CM15}, Cadambe and Mazumdar derived a field-dependent bound, known as the C-M bound, 
\[k\leq \min_{t\in \mathbb{Z}_+} \{tr+k_{opt}^{(q)}(n-(r+1)t,d)\},\]
where $k_{opt}^{(q)}(n,d)$ is the largest possible dimension of an $[n,k,d]$ linear code over $\mathbb{F}_q$. Unfortunately, the value $k_{opt}^{(q)}(n,d)$ is not completely determined. 
For binary locally repairable codes, several explicit upper bounds on the dimension of linear locally repairable codes were given via a sphere-packing approach in \cite{AZD19}. 
Any locally repairable code achieving one of these bounds is called dimension-optimal, and many dimensional-optimal locally repairable codes have been constructed in \cite{HYUS16, LC21}. 
In particular, some propagation rules has been employed in \cite{CM15,AZD19}. The construction of locally repairable codes via concatenation was introduced by using a simple parity-check code as inner code \cite{CM15}, and a shortening technique to construct locally repairable codes was given in \cite[Lemma 10]{AZD19}. 

\subsection{Our Contributions and Techniques}
Among these constructions, propagation rules have not been extensively explored although various propagation rules have been discovered in classical coding theory. 
The current paper makes a step towards constructing locally repairable codes by exploring various propagation rules.
This paper makes the following six-fold contributions:
\begin{itemize}
\item[(i)] give two main propagation rules together with three other propagation rules for constructions of locally repairable codes. In addition, we show  that some trivial propagation rules in classical block codes do not hold anymore for locally repairable codes;
    \item[(ii)] construct many dimension-optimal binary locally repairable codes  based on the first main propagation rule, i.e., concatenating a locally repairable code as an inner code with a classical block code as an outer code. More dimension-optimal binary \LRCs are produced via minor propagation rules based on the aforementioned dimension-optimal locally repairable codes;
        \item[(iii)]  by concatenating locally repairable codes with algebraic geometry codes via the first main propagation rule, we are able to break the Zyablov-type bound for locally repairable codes;
            \item[(iv)]   produce a family of dimension-optimal binary locally repairable codes from extended Hamming codes based on the second main propagation rule, i.e.,  add certain rows and columns to a parity-check matrix of a given linear code. %Moreover, new dimension-optimal binary \LRCs can be produced via minor propagation rules based on the aforementioned dimension-optimal locally repairable codes;
                    \item[(v)] convert Reed-Solomon codes into Singleton-optimal locally repairable codes via the second main propagation rule;
                             \item[(vi)] greatly  simplify the construction of a family of locally repairable codes given in \cite[Theorem 5]{MX20} that breaks the asymptotic Gilbert-Varshamov bound.
\end{itemize}

\subsection{Organization}
This paper is organized as follows. In Section \ref{sec:2}, we present some preliminaries including linear codes, algebraic geometry codes and locally repairable codes. In Section \ref{sec:3}, we introduce the first main propagation rule together with three other propagation rules, and then present many dimension-optimal locally repairable codes. In addition, we show that some trivial propagation rules for classical block codes do not hold anymore for locally repairable codes. Furthermore, we explicitly construct  a family of locally repairable codes that exceed the Zyablov-type bound. In Section \ref{sec:4}, we produce a family of dimension-optimal binary \LRCs from extended Hamming codes via the second main propagation rule, and convert a classical MDS code into a Singleton-optimal locally repairable code. Finally, we simplify a construction of a family of locally repairable codes which can break the asymptotic Gilbert-Varshamov bound.

\section{Preliminaries}\label{sec:2}
In this section, we present some preliminaries on the theory of linear codes, algebraic geometry codes and locally repairable codes.

\subsection{Linear codes}
In this subsection, we briefly discuss linear codes. The reader may refer to \cite{LX04,MS77} for more details.
Let $q$ be a prime power and $\mathbb{F}_q$ be the finite field with $q$ elements. Let $\mathbb{F}_q^n$ be the vector space of dimension $n$ over $\mathbb{F}_q$.
A linear code $\mC$ of length $n$ over $\mathbb{F}_q$ is an $\mathbb{F}_q$-subspace of $\mathbb{F}_q^n$. 
The dimension of $\mC$ is defined to be the dimension of $\mC$ as a vector space over $\F_q$. Any element in $\mC$ is called a codeword. 
The support of a codeword ${\bf u}=(u_1,\cdots, u_n)\in C$ is defined by $\text{supp}({\bf u})=\{i\in[n]: u_i\neq 0\},$ where $[n]=\{1,2,\cdots, n\}.$ 
The Hamming weight $\text{wt}(\bf u)$ of ${\bf u}$ is defined to be the size of $\text{supp}({\bf u}).$
 If $\mC\neq \{0\}$, then the minimum distance of $\mC$ is defined to be the smallest Hamming weight of nonzero codewords in $\mC$.

A $q$-ary linear code with length $n$, dimension $k$ and minimum distance $d$ is denoted as an $[n,k,d]_q$-linear code.
From the Singleton bound \cite[Theorem 5.4.1]{LX04}, we have the following inequality
\begin{equation}\label{eq:1}
d\leq n-k+1.
\end{equation}
A linear code with minimum distance achieving this Singleton bound \eqref{eq:1} is called an maximum distance separable code (MDS code for short).
The dual $\mC^\perp$ of any code $\mC$ is the orthogonal complement of $\mC$ in $\F_q^n$, i.e.,
$\mC^\perp:=\{\bx \in \F_q^n: \langle \bx,\bc\rangle =0 \text{ for any } \bc \in \mC\},$
where $\langle \cdot,\cdot\rangle$ is the canonical Euclidean inner product in $\F_q^n$.

\subsection{Algebraic geometry codes}\label{sec:2.2}
%For the construction of algebraic geometry codes,  the reader may refer to \cite{St09,TV91} for more details.
Let $F/\F_q$ be a function field with genus $g$ over the full constant field $\F_q$.  Let $\PP_F$ denote the set of places of $F$. Any place of $F$ with degree one is called rational.
%Let $(u)$ be the principal divisor of $u\in F^*$. 
For a divisor $G$ of function field $F/\F_q$, we define the Riemann-Roch space by
$\mL(G):=\{u\in F^*:\; (u)+G\ge 0\}\cup\{0\}.$
From Riemann's theorem, $\mathcal{L}(G)$ is a finite-dimensional vector space over $\F_q$ and its dimension $\ell(G)$ is lower bounded by $\ell(G)\ge \deg(G)-g+1$. Moreover, the equality holds true if $\deg(G)\ge 2g-1$ from \cite[Theorem 1.5.17]{St09}.

Let $\mP=\{P_1,\dots,P_n\}$ be a set of distinct rational places of $F$.
For a special divisor $G$ of $F$ with $0<\deg(G)<n$ and $\supp(G)\cap\mP=\emptyset$, the algebraic geometry code associated with $\mP$ and $G$ is defined to be
$\mC(\mP,G):=\{(f(P_1),f(P_2),\dots,f(P_n)): \; f\in\mL(G)\}.$
Then $\mC(\mP,G)$ is an $[n,k,d]_q$-linear code with dimension $k=\ell(G)$ and minimum distance $d\ge n-\deg(G)$ from \cite[Theorem 2.2.2]{St09}.

Let $N_q(g)$ be the maximum number of rational places of function fields over $\F_q$ with genus $g$. %, i.e., $N_q(g):=\max \{N(F): F \text{ is a function field over } \F_q \text{ of genus } g\}. $
The real number $A(q)$ defined by $A(q):=\limsup_{g\rightarrow \infty} \frac{N_q(g)}{g}$ is called the Ihara's constant. If $q$ is a square, then $A(q)=\sqrt{q}-1$ \cite{GS95}. 
\iffalse
Choose a family of function fields $\{F_i/\F_q\}_{i=1}^\infty$ with $n_i=N(F_i)$ rational places and genus $g_i=g(F_i)$ such that
$$\lim_{i\rightarrow \infty} g_i=+\infty \text{ and } \lim_{i\rightarrow \infty} \frac{n_i}{g_i}=A(q).$$
For any $0\le \delta \le 1-A(q)^{-1}$, there exists a sequence of $\{r_i\}_{i=1}^\infty$ such that $\lim_{i\rightarrow \infty} r_i/n_i=1-\delta$.
From \cite[Proposition 8.4.6]{St09}, there exists a family of algebraic geometry codes with $[n_i,k_i,d_i]$ such that $k_i\ge r_i+1-g_i$ and $d_i\ge n_i-r_i$.
Let $R$ be the information rate of this family of algebraic geometry codes $[n_i,k_i,d_i]$.
It follows that
$$R=\lim_{i\rightarrow \infty} \frac{k_i}{n_i}=\lim_{i\rightarrow \infty} \frac{n_i-d_i+1-g_i}{n_i}\ge 1-\frac{1}{A(q)}-\delta.$$
\fi
The famous Tsfaman-Vl\u{a}du\c{t}-Zink bound (TVZ bound for short) can be found from \cite[Theorem 8.4.7]{St09}.

\begin{prop}\label{prop:2.2}
Let $q=\ell^2$ be a square of a prime power. Then for all $\delta$ with $0\le \delta\le 1-(\ell -1)^{-1}$, there exists a family of algebraic geometry codes such that its information rate $R$ and relative minimum distance $\delta$ satisfy
$R\ge 1-\frac{1}{\sqrt{q}-1}-\delta.$
\end{prop}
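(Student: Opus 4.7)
The plan is to combine the deep result that $A(q)=\sqrt{q}-1$ whenever $q=\ell^2$ is a square (due to Ihara and Tsfasman--Vl\u{a}du\c{t}--Zink, realised explicitly by the Garcia--Stichtenoth tower in \cite{GS95}) with the standard Riemann--Roch lower bound on $\ell(G)$ and the basic distance estimate for algebraic geometry codes recorded in Section \ref{sec:2.2}. Fix a sequence of function fields $\{F_i/\F_q\}_{i=1}^\infty$ with $g_i:=g(F_i)\to\infty$ and $n_i:=N(F_i)$ satisfying $\lim_{i\to\infty} n_i/g_i = A(q)=\sqrt{q}-1$; such a sequence exists precisely because $q$ is a square. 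Let $\mP_i=\{P_1^{(i)},\dots,P_{n_i}^{(i)}\}$ denote these $n_i$ rational places of $F_i$, and pick any divisor $G_i$ of degree $r_i:=\lfloor (1-\delta)n_i \rfloor$ whose support is disjoint from $\mP_i$ (for instance, a positive multiple of a single unused rational place when $\delta<1$, or an effective divisor concentrated at higher-degree places; the boundary case $\delta=0$ is trivial).

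Next, form the algebraic geometry code $C_i:=\mC(\mP_i,G_i)$. By Riemann's inequality $\ell(G_i)\ge r_i-g_i+1$, so the dimension satisfies $k_i\ge r_i-g_i+1$; and by the distance bound of Section \ref{sec:2.2}, the minimum distance satisfies $d_i\ge n_i-r_i$. Dividing by $n_i$ and passing to the limit yields
\begin{equation*}
R=\lim_{i\to\infty}\frac{k_i}{n_i}\ge \lim_{i\to\infty}\frac{r_i-g_i+1}{n_i}=(1-\delta)-\frac{1}{A(q)}=1-\frac{1}{\sqrt{q}-1}-\delta,
\end{equation*}
while $d_i/n_i\ge 1-r_i/n_i\to\delta$. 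The admissibility constraint $\delta\le 1-(\sqrt{q}-1)^{-1}$ is exactly the condition $R\ge 0$, so the resulting bound is meaningful on the entire claimed range of $\delta$.

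The genuinely hard ingredient is the very first step, namely the existence of a function field tower over $\F_q$ achieving $A(q)=\sqrt{q}-1$; every remaining estimate is a routine consequence of Riemann--Roch and the definitions of $\mC(\mP,G)$. For the write-up we would simply cite \cite{GS95}, already referenced in the excerpt, so no new technical work is required beyond assembling the three standard ingredients in the order above.
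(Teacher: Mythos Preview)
Your argument is correct and is exactly the standard proof of the Tsfasman--Vl\u{a}du\c{t}--Zink bound. The paper does not actually prove Proposition~\ref{prop:2.2}; it merely cites \cite[Theorem 8.4.7]{St09}. (The \LaTeX\ source does contain a commented-out sketch that is essentially identical to yours: take a tower with $n_i/g_i\to A(q)=\sqrt{q}-1$, choose divisors of degree $r_i\approx(1-\delta)n_i$, apply Riemann--Roch for $k_i\ge r_i-g_i+1$ and the distance bound $d_i\ge n_i-r_i$, and pass to the limit.)
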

\iffalse
Amazingly, the Tsfasman-Vl\u{a}du\c{t}-Zink bound can improve the Gilbert-Varshamov bound in a certain interval for $q\ge 49$. Furthermore, such a family of algebraic geometry codes can be explicitly constructed from the Garcia-Stichtenoth tower of function fields \cite{GS95}.
\fi

\subsection{Locally repairable codes}\label{sec:2.3}
Roughly speaking, a block code is said with locality $r$ if  every coordinate of any given codeword can be recovered by accessing at most $r$ other coordinates of this codeword. 
A formal definition of a locally repairable code with locality $r$ can be given as follows.

\begin{defn}
A $q$-ary code of length $n$ is called a locally repairable code (LRC for short) with locality $r$ if for any $i\in[n],$ there exists a subset $R\subseteq [n]\setminus\{i\}$ of size $r$ such that for any ${\bf c}=(c_1,\cdots, c_n)\in\mC,$ $c_i$ can be recovered by $\{c_j\}_{j\in R},$ i.e., for any $i\in[n],$ there exists a subset $R\subseteq [n]\setminus\{i\}$ of size $r$ such that for any ${\bf u, v}\in\mC, {\bf u}_{R\cup \{i\}}={\bf v}_{R\cup \{i\}}$ if and only if ${\bf u}_R={\bf v}_R.$ The set $R\cup \{i\}$ is called a recovery set of $i.$ Note that we include $i$ in the recovery set for convenience. 
\end{defn}

In this paper, we always consider linear locally repairable codes.  Thus, a linear \LRC over $\F_q$ of length $n$, dimension $k$, minimum distance $d$ and locality $r$ is denoted to be an $[n,k,d;r]_q$-locally repairable code.
For such a $q$-ary $[n,k,d;r]$-locally repairable code, the minimum distance of $\mC$ is upper bounded by
\begin{equation}\label{Singletonbound}
d\le n-k-\left\lceil \frac{k}{r} \right\rceil+2.
\end{equation}
A code achieving this bound \eqref{Singletonbound} is usually called an optimal locally repairable code. However, we refer it as a Singleton-optimal locally repairable code in this paper.

Recovery sets of a linear locally repairable code can be characterized by its dual code from \cite[Lemma 5]{GXY19}. The precise result is given as below.
\begin{lemma}\label{lem:2.2}
A subset $R\subseteq [n]$ containing $i$ is a recovery set at $i$ for a $q$-ary linear code of length $n$ if and only if there exists a codeword ${\bf c}\in \mC^{\bot}$ such that $i\in {\rm supp}({\bf c})\subseteq R.$
\end{lemma}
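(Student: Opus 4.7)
The plan is to unpack the definition of ``recovery set'' into a linear-algebra condition and then apply a standard duality argument. Write $R' := R \setminus \{i\}$, so the definition requires that for all $\mathbf{u}, \mathbf{v} \in \mC$, one has $\mathbf{u}|_{R'} = \mathbf{v}|_{R'}$ iff $\mathbf{u}|_R = \mathbf{v}|_R$. Applying this to the difference $\mathbf{u} - \mathbf{v}$, which is again in $\mC$ by linearity, the recovery-set condition becomes: every codeword $\mathbf{w}\in\mC$ with $\mathbf{w}|_{R'} = \mathbf{0}$ also satisfies $w_i = 0$. Equivalently, the coordinate evaluation $\mathbf{u}\mapsto u_i$ on $\mC$ factors through the projection $\pi_{R'}: \mC \to \F_q^{R'}$, $\mathbf{u}\mapsto \mathbf{u}|_{R'}$, as an $\F_q$-linear map.

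For the ($\Leftarrow$) direction I would take a dual codeword $\mathbf{c}\in\mC^{\bot}$ with $i\in\supp(\mathbf{c})\subseteq R$ and rearrange the orthogonality relation $c_i u_i + \sum_{j\in R'} c_j u_j = 0$ into the explicit recovery rule $u_i = -c_i^{-1}\sum_{j\in R'} c_j u_j$, which is well-defined because $c_i\neq 0$. This direction is essentially one line.

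For the ($\Rightarrow$) direction, the real work is in manufacturing $\mathbf{c}\in\mC^{\bot}$ from the recovery property. By the reformulation above, there is a well-defined $\F_q$-linear map $\phi: \pi_{R'}(\mC)\to \F_q$ sending $\mathbf{u}|_{R'}$ to $u_i$. I would extend $\phi$ to an $\F_q$-linear form $\tilde\phi(\mathbf{v}) = \sum_{j\in R'} a_j v_j$ on all of $\F_q^{R'}$, then define $\mathbf{c}\in \F_q^n$ by $c_i = -1$, $c_j = a_j$ for $j\in R'$, and $c_j = 0$ otherwise. A direct check gives $\langle \mathbf{c},\mathbf{u}\rangle = -u_i + \tilde\phi(\mathbf{u}|_{R'}) = 0$ for every $\mathbf{u}\in\mC$, so $\mathbf{c}\in\mC^{\bot}$, with $\supp(\mathbf{c})\subseteq R$ and $i\in\supp(\mathbf{c})$ since $c_i = -1$.

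The main subtlety is in the $\Rightarrow$ direction: one must extract an \emph{actually linear} recovery function from the nominally set-theoretic statement in the definition, and then extend it from the subspace $\pi_{R'}(\mC)$ to all of $\F_q^{R'}$ so that it yields a genuine parity-check equation on the full coordinate space. Once that extension is in hand, the explicit construction of $\mathbf{c}$ and the verification of orthogonality are both completely routine.
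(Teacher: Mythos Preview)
Your argument is correct. The paper itself does not supply a proof of this lemma; it quotes the statement from \cite[Lemma 5]{GXY19} and uses it as a black box, so there is no in-paper proof to compare against. Your proof is the standard duality argument and matches what one finds in the literature: the $(\Leftarrow)$ direction is the one-line observation that a dual codeword supported in $R$ with $c_i\neq 0$ gives an explicit linear recovery rule, and the $(\Rightarrow)$ direction correctly passes from the well-definedness of $u_i$ as a function of $\mathbf{u}|_{R'}$ to a linear functional on $\pi_{R'}(\mC)$, extends it to $\F_q^{R'}$, and reads off the coefficients of the desired $\mathbf{c}\in\mC^{\bot}$.
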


For a linear code ${\mC}$  and $r\ge 1$,  we define the set
$\mathfrak{R}_{\mC}(r):=\{\supp(\bc):\; \bc\in {\mC}^\perp,\; |\supp(\bc)|\le r+1\}.$
The following result is a straightforward corollary of Lemma \ref{lem:2.2}.

\begin{cor}\label{cor:2.7} 
A linear code ${\mC}$ has locality $r$ if and only if $[n]=\cup_{I\in \mathfrak{R}_{\mC}(r)}I$.
\end{cor}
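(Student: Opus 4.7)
The plan is to unwrap the definition of locality coordinate by coordinate and apply Lemma \ref{lem:2.2} on each side. The statement is a global (cover-of-$[n]$) reformulation of the existence claim in Lemma \ref{lem:2.2}, so the work is essentially bookkeeping between ``there exists a small-support dual codeword through $i$'' and ``$i$ lies in some small-support dual codeword.''

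For the forward direction, I would assume that $\mC$ has locality $r$, fix an arbitrary $i\in[n]$, and take a recovery set $R\ni i$ with $|R|=r+1$. By Lemma \ref{lem:2.2} there exists $\bc\in\mC^\perp$ with $i\in\supp(\bc)\subseteq R$; in particular $|\supp(\bc)|\le r+1$, so $\supp(\bc)\in\mathfrak{R}_{\mC}(r)$ and $i\in\cup_{I\in\mathfrak{R}_{\mC}(r)}I$. Since $i$ was arbitrary, $[n]=\cup_{I\in\mathfrak{R}_{\mC}(r)}I$.

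For the converse, I would assume $[n]=\cup_{I\in\mathfrak{R}_{\mC}(r)}I$ and fix $i\in[n]$. Pick $I\in\mathfrak{R}_{\mC}(r)$ with $i\in I$; by definition $I=\supp(\bc)$ for some $\bc\in\mC^\perp$ with $|I|\le r+1$. Choose any superset $R\supseteq I$ with $i\in R$ and $|R|=r+1$ (extending by arbitrary elements of $[n]$ if $|I|<r+1$). Then $i\in\supp(\bc)\subseteq R$, so Lemma \ref{lem:2.2} gives that $R$ is a recovery set of $i$, witnessing locality $r$ at $i$.

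There is no real obstacle here; the only subtlety is that the recovery set in the definition is required to have size exactly $r+1$, while sets in $\mathfrak{R}_{\mC}(r)$ are only required to have size \emph{at most} $r+1$. This is handled by the padding step in the converse, and by the direct inclusion $\supp(\bc)\subseteq R$ in the forward direction, using the fact that any superset of a recovery set is again a recovery set by Lemma \ref{lem:2.2}.
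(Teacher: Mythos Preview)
Your argument is correct and is exactly the intended unpacking: the paper does not supply a proof at all, merely labeling the statement as a ``straightforward corollary'' of Lemma~\ref{lem:2.2}. Your two directions, including the padding step to match the exact-size convention for recovery sets, are precisely what that phrase hides.
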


\iffalse
Now we introduce the recovery sets which are pairwise disjoint.
\begin{defn}{\rm { Let $\ell$ be a positive integer.} We say that a linear code ${\mC}$ with locality  $r$ has  disjoint recovery sets if there exist subsets $\{I_1, I_2, \cdots, I_\ell\}\subseteq \mathfrak{R}_{\mC}(r)$ that form a partition of $[n]$.
}\end{defn}
\fi

In this paper, we will also study the asymptotic behavior of locally repairable codes. Let the locality $r$ and alphabet size $q$ be fixed, and let the dimension and minimum distance be proportional to the length $n$.
Let $R_q(r,\delta)$ denote the asymptotic bound on the rate of $q$-ary locally repairable codes with locality $r$ and relative minimum distance $\delta$, i.e., $$R_q(r,\delta)=\limsup_{n\rightarrow \infty} \frac{\log_qM_q(n,\lfloor\delta n\rfloor,r)}{n} ,$$
where $M_q(n,d,r)$ is the maximum size of  \LRCs of length $n$, minimum distance $d$ and locality $r$.

For $0\le \delta \le 1-q^{-1}$, the asymptotic Gilbert-Varshamov bound of locally repairable codes is given in \cite{TBF16} by
$$R_q(r,\delta)\ge 1-\min_{0<s\le 1} \Big{\{}\frac{1}{r+1} \log_q\Big([1+(q-1)s]^{r+1}+(q-1)(1-s)^{r+1}\Big)-\delta \log_qs\Big{\}}.$$
%For the classical codes, it is well known that  the Tsfasman-Vl\u{a}du\c{t}-Zink bound can improve upon the Gilbert-Varshamov bound in a certain interval for any square prime power $q\ge 49$ from \cite[Theorem 8.4.7]{St09}.
Barg {\it et al.} \cite{BTV17} gave a construction of  asymptotically good $q$-ary locally repairable codes with locality $r$ whose rate $R$ and relative distance $\delta$ satisfy
\begin{equation}\label{construction_l}
R \ge\frac{r}{r+1}\Big{(}1-\delta-\frac{3}{\sqrt{q}+1}\Big{)},\quad r=\sqrt{q}-1,
\end{equation}
and
\begin{equation}\label{construction_l+1}
R \ge \frac{r}{r+1}\Big{(}1-\delta-\frac{\sqrt{q}+r}{q-1}\Big{)},\quad (r+1)|(\sqrt{q}+1).
\end{equation}
Furthermore, it was shown in \cite{BTV17} that for some values $r$ and $q$, the bound \eqref{construction_l+1} exceeds the asymptotic Gilbert-Varshamov bound  for locally repairable codes.
 Li {\it et al.} \cite{LMX19b} generalized the idea given in \cite{BTV17} by considering more subgroups of automorphism groups of function fields in the Garcia-Stichtenoth tower \cite{GS95}. This construction allows more flexibility of locality. In particular, if $r+1=up^v$ with $u|\gcd(p^v-1, \sqrt{q}-1)$, then there exists a family of explicit $q$-ary linear locally repairable codes with locality $r$ whose rate $R$ and relative distance $\delta$ satisfy
\begin{equation}\label{eq:lmx17}R\ge \frac{r}{1+r}\Big{(}1-\delta-\frac{\sqrt{q}+r-1}{q-\sqrt{q}}\Big{)}.\end{equation}
In order to overcome the restrictions on alphabet size $q$ and locality $r$, the authors \cite{MX20} provided an explicit construction via parity-check matrices whose columns are formed by coefficients of local expansions of function fields in the Garcia-Stichtenoth tower. In particular, for any fixed $q$ and $r$, it has been proved in \cite{MX20} that
 \begin{equation}
R\ge \frac{r}{r+1}-\frac{1}{A(q)}\times \frac{r}{r+1}-\delta.
\end{equation}

\section{Locally Repairable Codes via Concatenations}\label{sec:3}
In this section, we present the first main propagation rule together with three other propagation rules and construct many dimension-optimal locally repairable codes. In addition, we show  that some trivial propagation rules for classical block codes do not hold anymore for locally repairable codes.% Finally, we explicitly construct a family of locally repairable codes that exceeds the Zyablov-type bound.

\subsection{Concatenations and dimension-optimal locally repairable codes}
Let us start with the first main  propagation rule by concatenating a locally repairable  code as an inner code with a classical block code as an outer code. 

\begin{theorem}~\label{cc}
Let the inner code $\mathcal{C}_{in}$ be a $q$-ary  $[n_1,k_1,d_1;r]$-locally repairable code and let the outer code $\mathcal{C}_{out}$ be a $q^{k_1}$-ary $[n_2,k_2,d_2]$-linear code, then one can concatenate $\mathcal{C}_{in}$ with  $\mathcal{C}_{out}$ to obtain  an $[n_1n_2,k_1k_2,\ge d_1d_2;r]_q$-locally repairable code $\mathcal{C}_{conc}$.
\end{theorem}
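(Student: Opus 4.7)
The plan is to exhibit the concatenated code $\mathcal{C}_{conc}$ explicitly and then verify each of the four claimed parameters (length, $\F_q$-dimension, minimum distance, and locality) in turn. First I would fix an $\F_q$-linear bijection $\varphi\colon \F_{q^{k_1}}\to \mathcal{C}_{in}$ between the scalar field of the outer code and the inner code, which exists because $\mathcal{C}_{in}$ has $\F_q$-dimension $k_1$. Using $\varphi$, define the concatenation map $\Phi\colon \F_{q^{k_1}}^{n_2}\to \F_q^{n_1 n_2}$ by applying $\varphi$ coordinate-wise and reading the resulting $n_2$ blocks of length $n_1$ as one long vector in $\F_q^{n_1 n_2}$. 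Set $\mathcal{C}_{conc}:=\Phi(\mathcal{C}_{out})$.

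The length $n_1 n_2$ is immediate from the construction. Since $\Phi$ is $\F_q$-linear and injective, the $\F_q$-dimension of $\mathcal{C}_{conc}$ equals the $\F_q$-dimension of $\mathcal{C}_{out}$, which is $k_1 k_2$ because $\mathcal{C}_{out}$ has $\F_{q^{k_1}}$-dimension $k_2$. For the minimum distance bound, I would take an arbitrary nonzero $\bu\in\mathcal{C}_{out}$. By definition of $d_2$, the vector $\bu$ has at least $d_2$ nonzero coordinates, and each such coordinate $u_j$ is mapped by $\varphi$ to a nonzero codeword of $\mathcal{C}_{in}$, which has Hamming weight at least $d_1$. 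Since the contributions of distinct outer coordinates lie in disjoint blocks of length $n_1$, summing their weights yields $\mathrm{wt}(\Phi(\bu))\ge d_1 d_2$, and taking the minimum over nonzero $\bu$ gives the claimed bound on the minimum distance of $\mathcal{C}_{conc}$.

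For the locality, I would index the $n_1 n_2$ coordinates of $\mathcal{C}_{conc}$ by pairs $(j,\ell)$ with $j\in[n_2]$ and $\ell\in[n_1]$, so that the $j$-th block of length $n_1$ in any concatenated codeword is itself a codeword of $\mathcal{C}_{in}$. Fix any such coordinate $(j,\ell)$. Since $\mathcal{C}_{in}$ has locality $r$, there is a recovery set $R\subseteq [n_1]\setminus\{\ell\}$ of size $r$ for the $\ell$-th coordinate inside $\mathcal{C}_{in}$. Then $\{(j,s):s\in R\}$ is a recovery set of size $r$ for $(j,\ell)$ in $\mathcal{C}_{conc}$, because repairing happens entirely within the $j$-th inner block. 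This gives locality $r$ for every coordinate.

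The proof is essentially bookkeeping, and the only delicate point is ensuring that the encoding map $\varphi$ is taken to be $\F_q$-linear so that $\mathcal{C}_{conc}$ really is an $\F_q$-linear code of dimension $k_1 k_2$; once that is in place, the distance lower bound follows from the standard block-disjointness argument and the locality follows because the local recovery structure of $\mathcal{C}_{in}$ is preserved inside each inner block of the concatenation.
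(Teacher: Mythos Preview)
Your proposal is correct and follows essentially the same approach as the paper: fix an $\F_q$-linear isomorphism $\varphi\colon\F_{q^{k_1}}\to\mathcal{C}_{in}$, define $\mathcal{C}_{conc}$ by applying $\varphi$ blockwise to $\mathcal{C}_{out}$, and observe that locality is inherited within each inner block. The only difference is that you spell out the standard dimension and distance arguments for concatenated codes, whereas the paper simply cites them.
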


\begin{proof} 
Fix an $\mathbb{F}_q$-vector space isomorphism $\varphi$ between $\mathbb{F}_{q^{k_1}}$ and $\mathcal{C}_{in}.$ Define $\mathcal{C}_{conc}$ by
\[\mathcal{C}_{conc}=\{(\varphi(c_1),\varphi(c_2),\dots,\varphi(c_{n_2})):\; (c_1,c_2,\dots,c_{n_2})\in\mathcal{C}_{out}\}. \]
Then $\mathcal{C}_{conc}$ is an $[n_1n_2,k_1k_2,\ge d_1d_2]_q$-linear code from \cite[Theorem 6.3.1]{LX04}. Every position of a codeword of $\mathcal{C}_{conc}$ can be determined by $r$ other positions due to the fact that the inner code $\mathcal{C}_{in}$  has locality $r$.  Hence, $\mathcal{C}_{conc}$ has locality $r$.
\end{proof}

Although the concatenation technique is simple, it is quite powerful. By concatenating locally repairable  codes with classical block codes, we can construct many good binary linear locally repairable codes that achieve the upper bound given in \cite[Theorem 6]{AZD19}.
%To illustrate this, we present many good binary locally repairable codes arising from this concatenation. Let us cite an upper bound on the dimension of locally repairable codes with length $n$, minimum distance $d\ge 5$ and locality $r$ given in \cite[Theorem 6]{AZD19}.

\begin{lemma}\label{AZD19}
For any binary $[n,k,d;r]$-locally repairable code with locality $r$ such that $d\ge 5$ and $2\le r\le \frac{n}{2}-2,$ it follows that
\begin{equation}\label{eq:AZD19}
k\le \left\lfloor\frac{rn}{r+1}-\min\left\{\log_2\left(1+\frac{rn}{2}\right),\frac{rn}{(r+1)(r+2)}\right\}\right\rfloor.
\end{equation}
\end{lemma}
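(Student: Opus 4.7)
Plan. My plan is to combine the locality of $\mC$ with a constrained sphere-packing in a quotient space, plus a separate combinatorial argument for the linear correction. I would first extract, via Corollary~\ref{cor:2.7} and a greedy matching argument on the covering hypergraph $\mathfrak{R}_{\mC}(r)$, a collection $G_1,\dots,G_m$ of $m \approx \lfloor n/(r+1)\rfloor$ pairwise disjoint recovery sets of size at most $r+1$. Over $\F_2$ each indicator vector $\mathbf{1}_{G_i}$ lies in $\mC^\perp$ by Lemma~\ref{lem:2.2}, and their disjoint supports force linear independence, so the subspace $L := \mathrm{span}(\mathbf{1}_{G_1},\dots,\mathbf{1}_{G_m}) \subseteq \mC^\perp$ has dimension $m$. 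This yields the baseline $k \le n-m \le \tfrac{rn}{r+1}$, and both subtracted terms in the lemma amount to lower-bounding the extra redundancy $n-m-k = \dim(L^\perp/\mC)$.

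The first (logarithmic) term arises from a sphere-packing inside $L^\perp$. Every vector of $L^\perp$ has even weight on each $G_i$, so the weight-$\le 2$ vectors of $L^\perp$ are exactly the zero vector together with weight-$2$ vectors supported inside a single $G_i$, totalling $1 + m\binom{r+1}{2} = 1 + \tfrac{rn}{2}$. The hypothesis $d \ge 5$ forces any two such distinct vectors to lie in different cosets of $\mC$ (otherwise their difference would be a non-zero codeword of weight $\le 4$). Therefore
\[ 2^{\,n-m-k} \;=\; |L^\perp/\mC| \;\ge\; 1+\tfrac{rn}{2}, \]
which, after rearrangement and absorbing the slack $n-m \le \lceil \tfrac{rn}{r+1}\rceil$ into the floor, delivers $k \le \big\lfloor\tfrac{rn}{r+1} - \log_2\!\bigl(1+\tfrac{rn}{2}\bigr)\big\rfloor$.

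The second (linear) term, equivalent to $k \le \tfrac{rn}{r+2}$, is \emph{not} a consequence of the same sphere-packing, since a weight-$2$ error whose two positions lie in different local groups does not belong to $L^\perp$. I would instead derive it through a Cadambe-Mazumdar-type reduction
\[ k \;\le\; rt + k_{opt}^{(2)}\!\bigl(n-(r+1)t,\,5\bigr) \]
tailored to binary LRCs at distance $5$: after puncturing $t$ disjoint recovery sets, bound the residual code's dimension using the dual-distance structure forced by $d\ge 5$ together with the surviving locality -- rather than the generic Hamming/Singleton estimate, which only gives a logarithmic correction. A suitable optimisation over $t$ extracts the coefficient $\tfrac{r}{(r+1)(r+2)}$. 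Combining both inequalities under a minimum and applying the integer floor yields the stated bound.

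The main obstacle is precisely the linear term. The logarithmic sphere-packing step is clean once the disjoint recovery sets are fixed, but the coefficient $\tfrac{r}{(r+1)(r+2)}$ cannot be obtained by bluntly plugging Hamming or Singleton bounds into the Cadambe-Mazumdar inequality; a finer exploitation of how the $m$ local parities interact with the global distance $d=5$ across different groups is required. Isolating the right residual bound, and then checking that the optimal $t$ really produces a \emph{linear} (rather than logarithmic) improvement over $\tfrac{rn}{r+1}$, is the delicate technical step where the LRC-specific structure must be used.
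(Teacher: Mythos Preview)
The paper does not prove this lemma. It is quoted from \cite[Theorem~6]{AZD19} (the label of the lemma is itself \texttt{AZD19}) and used purely as a black-box benchmark: the constructions in Tables~I--III and in Theorem~\ref{thm:4.3} are declared dimension-optimal simply by checking their parameters against \eqref{eq:AZD19}. There is therefore no argument in the present paper to compare your plan with.

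On the plan itself, two gaps remain. For the logarithmic branch, your count $1+m\binom{r+1}{2}=1+\tfrac{rn}{2}$ tacitly assumes a \emph{partition} of $[n]$ into $m=n/(r+1)$ recovery sets of size exactly $r+1$; a greedy selection from $\mathfrak{R}_{\mC}(r)$ guarantees neither that many pairwise disjoint sets nor that each has full size, so both the baseline $n-m$ and the ball count can fall short of the expressions you need. For the linear branch you explicitly concede the obstacle: inserting a Hamming or Singleton residual bound into the Cadambe--Mazumdar inequality produces only a logarithmic correction, and your outline does not supply the sharper residual estimate that would deliver the coefficient $\tfrac{r}{(r+1)(r+2)}$. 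As written, neither branch of the minimum is established in full generality.
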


A binary linear locally repairable code is called dimension-optimal if its dimension achieves the bound \eqref{eq:AZD19} in Lemma~\ref{AZD19}. 
Based on our concatenation given in Theorem \ref{cc}, we can provide many dimension-optimal binary locally repairable codes. % that achieves the bound \eqref{eq:AZD19}.
%There is a well-known upper bound on the dimension of locally repairable codes given by Cadambe and Mazumdar in \cite{CM15}, which is called the CM bound. Any locally repairable code with its dimension achieving the CM bound is also called dimension-optimal. The comparison of the CM bound and the sphere-packing bound can be found in \cite{AZD19}.

\begin{ex}
{\rm Let $\mC_{in}$ be a single parity-check $[5, 4, 2]_2$ code. %As its parity-check matrix has the form $(11111)$,  it has locality $4$.  
Let $\mC_{out}$ be a $[17, 15, 3]_{2^4}$ MDS code obtained from rational algebraic geometry codes. By Theorem~\ref{cc}, we obtain an $[85, 60, 6; 4]_2$-locally repairable code. On the other hand, the bound  \eqref{eq:AZD19}  in Lemma~\ref{AZD19} shows that every $[85, k, 6; 4]_2$-locally repairable code must obey $k\le 60.582$, i.e., $k\le 60$. This implies that an $[85, 60, 6; 4]_2$-locally repairable code is dimension-optimal.
Similarly, more dimension-optimal binary locally repairable codes via concatenations can be listed in the following table. 

\begin{center}
Table I\\
{Dimension-optimal Locally Repairable Codes via Concatenations}\\\medskip
\begin{tabular}{|c|c|c|}\hline \hline
Inner code&Outer code&Dimension-optimal binary LRCs\\ \hline
$[5, 4, 2;4]_2$&$[17,15,3]_{16}$&$[85,60,6;4]_2$\\ \hline
$[5, 4, 2;4]_2$&$[16,14,3]_{16}$&$[80,56,6;4]_2$\\ \hline
$[5, 4, 2;4]_2$&$[15,13,3]_{16}$&$[75,52,6;4]_2$\\ \hline
$[5, 4, 2;4]_2$&$[14,12,3]_{16}$&$[70,48,6;4]_2$\\ \hline
$[5, 4, 2;4]_2$&$[13,11,3]_{16}$&$[65,44,6;4]_2$\\ \hline
$[4,3,2;3]_2$&$[9,7,3]_{8}$&$[36,21,6;3]_2$\\ \hline
\end{tabular}
\end{center}}
\end{ex}

Some classical propagation rules given in \cite[Theorem 6.1.1]{LX04} can be generalized to construct new locally repairable codes from old ones. 
\begin{lemma}\label{lem:3.4} 
%Assume that there exists an $[n,k,d;r]_q$-locally repairable code $\mC$. Then
If $\mC$ is an $[n,k,d;r]_q$-locally repairable code, then
\begin{itemize}
\item[{\rm (i)}] there exists an $[n+1,k,d;r]_q$-locally repairable code;
\item[{\rm (ii)}] there exists an $[n-1,\ge k-1,d;r]_q$-locally repairable code;
\item[{\rm (iii)}] there exists an $[n-t,\ge k-t+s,\ge d-s;r]_q$-locally repairable code for any $0\le s\le t$, provided that $\mC$ has disjoint recovery sets and one of the recovery sets has size $t$.
\end{itemize}
\end{lemma}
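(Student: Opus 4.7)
The plan is to adapt three classical block-code propagation rules (coordinate extension, shortening, and combined shortening--puncturing on a recovery set) to the LRC setting. In each case the dimension and distance analysis is standard; the additional work is verifying the locality condition via the dual-support characterization in Corollary~\ref{cor:2.7}.

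For (i), append a zero coordinate to every codeword. The dimension and minimum weight are unchanged. The new position $n+1$ is always $0$, so any $r$-subset of the old coordinates is a recovery set for it, while every old position keeps its original recovery set.

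For (ii), shorten at one coordinate $i_0$: let $\mC'' = \{c \in \mC : c_{i_0} = 0\}$ and delete the $i_0$-th entry. The length becomes $n-1$, the dimension drops by at most one, and the minimum distance is at least $d$ since we only remove a zero entry. For locality I use the identity $(\mC'')^\perp = \pi_{[n] \setminus \{i_0\}}(\mC^\perp)$: every dual codeword of $\mC$ with support of size at most $r+1$ descends to a dual codeword of $\mC''$ with support of size at most $r+1$ (with $i_0$ removed if it was there). These supports still cover $[n] \setminus \{i_0\}$; pad each one to size $r+1$ with any extra coordinate if it has shrunk.

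For (iii), let $R$ be the given recovery set of size $t$, and let $\{R, R_2, \ldots, R_\ell\}$ be the partition of $[n]$ into disjoint recovery sets. Pick any subset $S \subseteq R$ with $|S| = t - s$, and form $\mC'''$ by first shortening $\mC$ at $S$ and then puncturing at the remaining $s$ coordinates $R \setminus S$; the new length is $n - t$. Shortening at $S$ reduces dimension by at most $t - s$ and preserves minimum distance, while puncturing the remaining $s$ coordinates reduces distance by at most $s$, giving $\dim \mC''' \ge k - t + s$ and $d(\mC''') \ge d - s$. For locality, the other dual codewords $c_j^*$ are supported on $R_j \subseteq [n] \setminus R$ by disjointness, so they survive both operations unchanged and continue to cover the new index set $[n] \setminus R$, yielding locality $r$.

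The main subtlety is the dimension bound in (iii): the puncturing step at $R \setminus S$ could in principle lower the dimension further, since the weight-$t$ dual codeword $c^*$ of $\mC$ supported on $R$ restricts to a nonzero dual codeword of the shortened code supported on $R\setminus S$. To control this I would use the disjoint-recovery-set hypothesis to show that any codeword of $\mC$ supported inside $R$ is constrained only by the single parity check $c^*$, so that the intersection with the shortening condition on $S$ forces the residual dimension loss to be bounded exactly as needed. The locality checks in (ii) and (iii) are otherwise routine from Corollary~\ref{cor:2.7}.
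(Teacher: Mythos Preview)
Your three constructions coincide with the paper's: append a zero coordinate, shorten at one position, and shorten-then-puncture on the chosen recovery set (the paper phrases shortening as ``deleting columns of a parity-check matrix'', which is the same operation). Two points need attention.

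For (ii) you only establish $d(\mC'')\ge d$, whereas the statement asks for minimum distance exactly $d$. The paper handles this by choosing the deleted coordinate outside the support of a fixed minimum-weight codeword of $\mC$ (equivalently, deleting a column of $H$ that does not lie among some $d$ linearly dependent columns), so that this weight-$d$ codeword survives in the shortened code and forces equality. Without specifying $i_0$ in this way your argument is incomplete.

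For (iii), your worry about dimension loss at the puncturing step is misdirected, and the dual-codeword reasoning you sketch is not the right tool. The intermediate shortened code already has minimum distance $\ge d$; hence, provided $s<d$, no nonzero codeword is supported on the $s$ punctured positions, the puncturing map is injective, and the dimension bound $\ge k-t+s$ follows at once. This is exactly what the paper uses (implicitly). The weight-$s$ \emph{dual} word supported on $R\setminus S$ that you point to is irrelevant: injectivity of puncturing is governed by \emph{primal} codewords supported on $R\setminus S$, not dual ones, so the proposed fix via ``only one parity check on $R$'' is neither needed nor correct. (When $s\ge d$ the bound $\ge d-s$ is vacuous and the dimension claim can in fact fail, but that edge case lies outside the intended range of the lemma.)
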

 \begin{proof} (i) By adding  $0$ to the $(n+1)$-th position of every codeword of $\mC$, one gets an $[n+1,k,d]_q$-linear code. For the $(n+1)$-th  position, we have a recovery set $\{n+1\}$. Thus, the new code has locality $r$ as well.

 (ii) Let $H$ be a parity-check matrix of $\mC$. From \cite[Theorem 4.5.6]{LX04}, there are $d$ columns of $H$ which are linearly dependent. Without loss of generality, assume that the last column of $H$ is not in these $d$ columns. We delete the last column of $H$ to form a $(n-k)\times(n-1)$ matrix $H_1$. Let $\mC_1$ be the code with $H_1$ as its parity-check matrix. It is clear that $\mC_1$ is an $[n-1,\ge k-1, d]_q$-linear code. To obtain the locality, for each $i\in[n-1]$, there exists a codeword $\bu\in\mC^\perp$ such that $i\in\supp(\bu)$ and $|\supp(\bu)|\le r+1$. Let $\bu_1$ be the vector obtained from $\bu$ by deleting the last position. Then $\bu_1\in\mC_1^\perp$, $i\in\supp(\bu_1)$ and $|\supp(\bu_1)|\le r+1$. This implies that the position $i$ has a recovery set of size at most $r+1$. Hence, $\mC_1$ is an $[n-1,\ge k-1,d;r]_q$-locally repairable code.

 (iii) Without loss of generality, we may assume that $\{n-t+1,n-t+2,\dots,n\}$ is a recovery set of size $t$.  Let $H$ be a parity-check matrix of $\mC$. We delete the last $t-s$ columns of $H$ to form an $(n-k)\times(n-t+s)$ matrix $H_2$. Let $\mC_2$ be the code with $H_2$ as its parity-check matrix. It is clear that $\mC_2$ is an $[n-t+s,\ge k-t+s,\ge d]_q$-linear code. As in (ii), we can show that $\mC_2$ has locality $r$ as well. Furthermore, $\{n-t+1,n-t+2,\cdots, n-t+s\}$ is a recovery set that is disjoint with other recovery sets. Now we delete the last $s$ positions of $\mC_2$ to obtain $\mC_3$. It is easy to see that $\mC_3$ is an $[n-t,\ge k-t+s,\ge d-s]_q$-linear code. As a whole recovery set of $\mC_2$ is deleted, $\mC_3$ has  disjoint recovery sets with each size being at most $r+1$, i.e., $\mC_3$ has locality $r$.
 \end{proof}

\begin{rem}{\rm For classical block codes, we have the following propagation rules: (i) an $[n,k,d]_q$-linear code gives an $[n-1,k,d-1]_q$-linear code;
(ii) an $[n,k,d]_q$-linear code gives an $[n,k,d-1]_q$-linear code. However, these two propagation rules does not hold anymore. Namely, (i) an $[n,k,d;r]_q$-locally repairable code does not always produce an $[n-1,k,d-1; r]_q$-locally repairable code;
(ii) an $[n,k,d;r]_q$-locally repairable code does not always produce an $[n,k,d-1; r]_q$-locally repairable code. To see this, let us give two counter-examples.

Counter-example 1: By Table I, we have a binary $[85,60,6;4]$-locally repairable code. Suppose we had a binary $[85-1,60,6-1;4]$-locally repairable code. Then by Lemma \ref{AZD19}, any $[84,k,5;4]_2$-\LRC must satisfy $k\le 59.8$. This is a contradiction.

Counter-example 2: Consider the binary $[5,4,2;4]$-locally repairable code. As we do not have a \LRC with minimum distance $1$, we have no way to get a $[5,4,1;4]$-locally repairable code.
}
\end{rem}

\begin{ex}{\rm 
We can make use of the propagation rules given in Lemma \ref{lem:3.4} to construct more dimension-optimal \LRCs from Table I. %In particular, we choose $t=5$ and $s=1$ to construct dimension-optimal locally repairable codes via the propagation rule of Lemma \ref{lem:3.4}(iii) in the following table. 
\begin{center}
Table II\\{Dimension-optimal Locally Repairable Codes via Propagation Rules in Lemma \ref{lem:3.4}} \medskip
\vspace{0.2cm}
\begin{tabular}{|c|c|c|}\hline \hline
Codes from Table I&Dimension-optimal LRCs&Propagation rules\\ \hline
$[85,60,6;4]_2$&$[84,59,6;4]_2$&Lemma \ref{lem:3.4}(ii)\\ \hline
$[80,56,6;4]_2$&$[79,55,6;4]_2$&Lemma \ref{lem:3.4}(ii)\\ \hline
$[75,52,6;4]_2$&$[74,51,6;4]_2$&Lemma \ref{lem:3.4}(ii)\\ \hline
$[85,60,6;4]_2$&$[80,56,\ge 5;4]_2$&Lemma \ref{lem:3.4}(iii)\\ \hline
$[80,56,6;4]_2$&$[75,52,\ge 5;4]_2$&Lemma \ref{lem:3.4}(iii)\\ \hline
$[75,52,6;4]_2$&$[70,48,\ge 5;4]_2$&Lemma \ref{lem:3.4}(iii)\\ \hline
$[70,48,6;4]_2$&$[65,44,\ge 5;4]_2$&Lemma \ref{lem:3.4}(iii)\\ \hline\hline
\end{tabular}
\end{center}
\vspace{0.2cm}
}\end{ex}

\subsection{Zyablov-type bound}
In classical coding theory, in order to obtain an explicit asymptotic bound, one can concatenate a family of linear codes achieving the Gilbert-Varshamov bound as an inner code and a Reed-Solomon code as an outer code. The explicit asymptotic bound obtained in this way is called the Zyablov bound. In this subsection, we will explore the same technique to obtain the Zyablov-type bound for locally repairable codes.

Consider the inner code $\mC_{\rm in}$ to be an $[n_1,k_1,d_1; r]_2$-locally repairable code that achieves the Gilbert-Varshamov bound, and the outer code $\mC_{\rm out}$ to be an $[n_2,k_2,d_2=n_2-k_2+1]_q$ Reed-Solomon code with $q=2^{k_1}$. The concatenated code $\mC_{\rm conc}$ is a binary $[n_1 n_2,k_1 k_2,\ge d_1 d_2;r]$-locally repairable code from Theorem \ref{cc}. In particular, its rate satisfies

\begin{equation}\label{eq:3.1}
R=\frac{k_1}{n_1}\times\frac{k_2}{n_2}=\left(1-h\left(r,\frac{d_1}{n_1}\right)+o(1)\right)\left(1-\frac{d_2}{n_2}\right),
\end{equation}
where
\begin{equation}\label{eq:3.2}
h(r,x):=\min_{0<s\le 1} \Big{\{}\frac{1}{r+1} \log_q\Big([1+(q-1)s]^{r+1}+(q-1)(1-s)^{r+1}\Big)-x \log_qs\Big{\}}.
\end{equation}
Put $\tau=\frac{d_1}{n_1}$ and $\Gd=\frac{d_1d_2}{n_1n_2}$. Then we have $\frac{d_2}{n_2}=\frac{\Gd}{\tau}$. Substituting $\frac{d_1}{n_1}=\tau$ and
$\frac{d_2}{n_2}=\frac{\Gd}{\tau}$ into Equation \eqref{eq:3.1}, we obtain the following Zyablov-type bound for locally repairable codes.

\begin{theorem}(Zyablov-type bound)
For given integer $r\ge 1$ and real $\Gd\in (0,1)$,
there exists a family of binary locally repairable codes of rate $R$, relative minimum distance $\delta$ and locality $r$ satisfying
$\displaystyle R = \max_{\Gd\leq \tau\leq 1} (1-\delta/\tau)(1-h(r,\tau)).$
%The function $1- h(r, \tau)$ is the Gilbert-Varshamov bound of locally recoverable codes.
Furthermore, this family of locally repairable codes can be constructed in polynomial time.
\end{theorem}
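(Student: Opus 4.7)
The plan is to make rigorous the computation in \eqref{eq:3.1}--\eqref{eq:3.2} preceding the statement and then handle the explicitness. First I would fix $\tau\in[\delta,1]$ and pick the inner code from the asymptotic Gilbert--Varshamov bound for binary LRCs recalled at the end of Section~\ref{sec:2.3}: this supplies a family of binary $[n_1,k_1,d_1;r]$-locally repairable codes with $d_1/n_1\to\tau$ and $k_1/n_1\ge 1-h(r,\tau)-o(1)$, where $h(r,x)$ is exactly the function in \eqref{eq:3.2}. The outer code would be a Reed--Solomon $[n_2,k_2,d_2]$ code over $\F_{2^{k_1}}$ with $d_2/n_2=\delta/\tau$ and thus rate $1-\delta/\tau$; this is admissible since $k_1$ is linear in $n_1$, so $n_2\le 2^{k_1}$ can always be arranged.

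By Theorem~\ref{cc} the concatenation is a binary $[n_1 n_2,\, k_1 k_2,\, \ge d_1 d_2;\, r]$-locally repairable code of relative minimum distance at least $(d_1/n_1)(d_2/n_2)=\delta$ and rate
\[
\frac{k_1 k_2}{n_1 n_2}\;\ge\;\bigl(1-h(r,\tau)-o(1)\bigr)\bigl(1-\delta/\tau\bigr).
\]
Letting $n_1,n_2\to\infty$ and then taking the supremum over $\tau\in[\delta,1]$ yields the claimed asymptotic rate; the constraint $\tau\ge\delta$ is needed so that $d_2/n_2\le 1$, and for values of $\tau$ where the right-hand side is negative the assertion is vacuous.

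The principal obstacle is the polynomial-time construction: the outer Reed--Solomon code is evaluated explicitly, but the inner code attaining the LRC Gilbert--Varshamov bound is obtained nonconstructively in \cite{TBF16}. I would resolve this by the standard Zyablov trick of choosing $n_1=\Theta(\log(n_1 n_2))$, so that an exhaustive search over all binary parity-check matrices of size $(n_1-k_1)\times n_1$ whose dual supports cover $[n_1]$ by sets of size at most $r+1$ (the locality constraint from Corollary~\ref{cor:2.7}) can be carried out in time $2^{O(n_1^2)}=\mathrm{poly}(n_1 n_2)$. Since the GV bound in \cite{TBF16} is proved by a counting argument, at least one code in this search space meets the required rate-distance trade-off, and the minimum distance of a candidate code can be certified in time $2^{O(n_1)}=\mathrm{poly}(n_1 n_2)$ by brute-force enumeration of its codewords.
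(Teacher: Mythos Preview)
Your derivation of the rate--distance trade-off is exactly the paper's: concatenate a binary LRC on the Gilbert--Varshamov curve (as inner code) with a Reed--Solomon code over $\F_{2^{k_1}}$ (as outer code) via Theorem~\ref{cc}, then optimize over the inner relative distance $\tau$. The paper gives precisely this computation in the paragraph preceding the theorem and offers no further argument, so on the existence side you match it.

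The gap is in your complexity estimate for the explicitness claim. You take $n_1=\Theta(\log N)$ with $N=n_1 n_2$ and then assert that brute-force search over all $(n_1-k_1)\times n_1$ binary parity-check matrices runs in time $2^{O(n_1^2)}=\mathrm{poly}(N)$. But $2^{O(n_1^2)}=2^{O(\log^2 N)}=N^{O(\log N)}$, which is only quasi-polynomial. Your per-candidate verification time $2^{O(n_1)}=\mathrm{poly}(N)$ is correct; it is the size of the search space that is too large. The paper itself does not justify the polynomial-time assertion, so you are attempting to supply a detail the paper omits, but the naive brute force you propose does not deliver it. To obtain genuine polynomial time one must shrink the search to $2^{O(n_1)}$ candidates --- e.g.\ by a Varshamov-style greedy construction of the parity-check matrix compatible with the locality constraints, or by derandomizing the random-coding proof of the LRC Gilbert--Varshamov bound in \cite{TBF16} via conditional expectations --- rather than enumerating all matrices.
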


\begin{rmk}
The Zyablov-type bound is worse than the Gilbert-Varshamov bound, since $h(r,\tau)$ is a decreasing function in the variable $\tau$ and  
\[ \max_{\Gd\leq \tau\leq 1} (1-\delta/\tau)(1-h(r,\tau))\le \max_{\Gd\leq \tau\leq 1} (1-h(r,\tau))\le 1-h(r,\Gd).\]
\iffalse
\begin{figure}[H]~\label{ZBGVB}
\begin{center}
\includegraphics[width=19em]{ZBGVB.png}
\end{center}
\caption{Comparison between the Gilbert-Varshamov bound and the Zyablov-type bound of locally repairable codes}~\label{ZBGVB}
\end{figure}
\fi
 \end{rmk}

\subsection{Breaking the Zyablov-type bound}
In this subsection, we provide an explicit  construction of asymptotically good locally repairable codes exceeding the Zyablov-type bound via concatenated codes.

\begin{prop}\label{prop:3.8}
{\rm Let $\mC_{in}$ be the extended Hamming code $[8,4,4]_2$ with locality $3$.
Let $\{\mC_{out}^{(i)}\}_{i=1}^\infty$ be a family of algebraic geometry codes $[n_i,k_i,d_i]_{2^4}$ attaining the TVZ bound given in Proposition \ref{prop:2.2}.
Then the concatenated codes  are a family of $[8n_i,4k_i,4d_i; 3]_{2}$-locally repairable codes with locality $3$ such that its information rate $R$ and relative minimum distance $\delta$ satisfy
$R+\delta\ge 1/3.$}
\end{prop}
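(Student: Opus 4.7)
The plan is a direct application of the concatenation theorem (Theorem~\ref{cc}) followed by the TVZ bound (Proposition~\ref{prop:2.2}). First, I would verify the claim on the inner code: the extended Hamming code $\mC_{in}=[8,4,4]_2$ is self-dual, and its $14$ weight-$4$ codewords form a $3$-$(8,4,1)$ design, so every coordinate $i\in[8]$ lies in the support of some $\bu\in\mC_{in}^\perp=\mC_{in}$ with $|\supp(\bu)|=4=r+1$. By Lemma~\ref{lem:2.2} (or Corollary~\ref{cor:2.7}), $\mC_{in}$ therefore has locality $r=3$, and it is indeed an $[8,4,4;3]_2$-locally repairable code.

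Next, I would apply Theorem~\ref{cc} with $\mC_{in}=[8,4,4;3]_2$ and $\mC_{out}^{(i)}=[n_i,k_i,d_i]_{2^4}$ an algebraic geometry code from Proposition~\ref{prop:2.2}. The theorem delivers a concatenated $[8n_i,4k_i,\ge 4d_i;3]_2$-locally repairable code, so its rate and relative minimum distance satisfy
\[
R=\frac{4k_i}{8n_i}=\frac{k_i}{2n_i},\qquad \delta\ge \frac{4d_i}{8n_i}=\frac{d_i}{2n_i}.
\]

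Finally, I would invoke Proposition~\ref{prop:2.2} with $q=16=4^2$, so that $\sqrt{q}-1=3$ and the TVZ bound yields
\[
\frac{k_i}{n_i}+\frac{d_i}{n_i}\ \ge\ 1-\frac{1}{\sqrt{q}-1}\ =\ \frac{2}{3}.
\]
Adding the two displayed estimates gives
\[
R+\delta\ \ge\ \frac{k_i+d_i}{2n_i}\ \ge\ \frac{1}{3},
\]
which is the claimed inequality.

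There is no real obstacle here; the proof is essentially a one-line computation once the inner locality is checked. The only subtle point to flag is the locality of the $[8,4,4]_2$ extended Hamming code, which is why I would spend a sentence justifying it via the weight-$4$ codewords of the (self-)dual covering every coordinate. Everything else is pure bookkeeping of the parameters output by Theorem~\ref{cc}.
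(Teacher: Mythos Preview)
Your proposal is correct and follows essentially the same approach as the paper: apply Theorem~\ref{cc} to get the concatenated parameters, then plug in the TVZ bound over $\F_{16}$ with $\sqrt{q}-1=3$ to obtain $R+\delta\ge 1/3$. Your extra sentence justifying locality $3$ for the extended Hamming code (via self-duality and the weight-$4$ codewords covering every coordinate) is a nice addition that the paper simply takes for granted; the only cosmetic point is that the TVZ inequality $\frac{k_i}{n_i}+\frac{d_i}{n_i}\ge \frac{2}{3}$ holds asymptotically rather than for each fixed $i$, so strictly you should pass to the limit as the paper does.
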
 
\begin{proof}
From Theorem \ref{cc}, the concatenated codes are a family of $[8n_i,4k_i,4d_i;3]_{2}$-locally repairable codes with locality $3$. Thus, we have
$$R+\delta=\lim_{i\rightarrow \infty} \frac{4k_i+4d_i}{8n_i} \ge \frac{1}{2}-\frac{1}{2\cdot A(2^4)}=\frac{1}{2}-\frac{1}{2\times 3}=\frac{1}{3}.$$
\end{proof}

\begin{rmk} Proposition \ref{prop:3.8} gives
 an explicit family of \LRCs with locality $3$ that are beyond the Zyablov-type bound in some interval (see Figure \ref{844}). 
\begin{figure}[H]~\label{844.png}
\begin{center}
\includegraphics[width=19em]{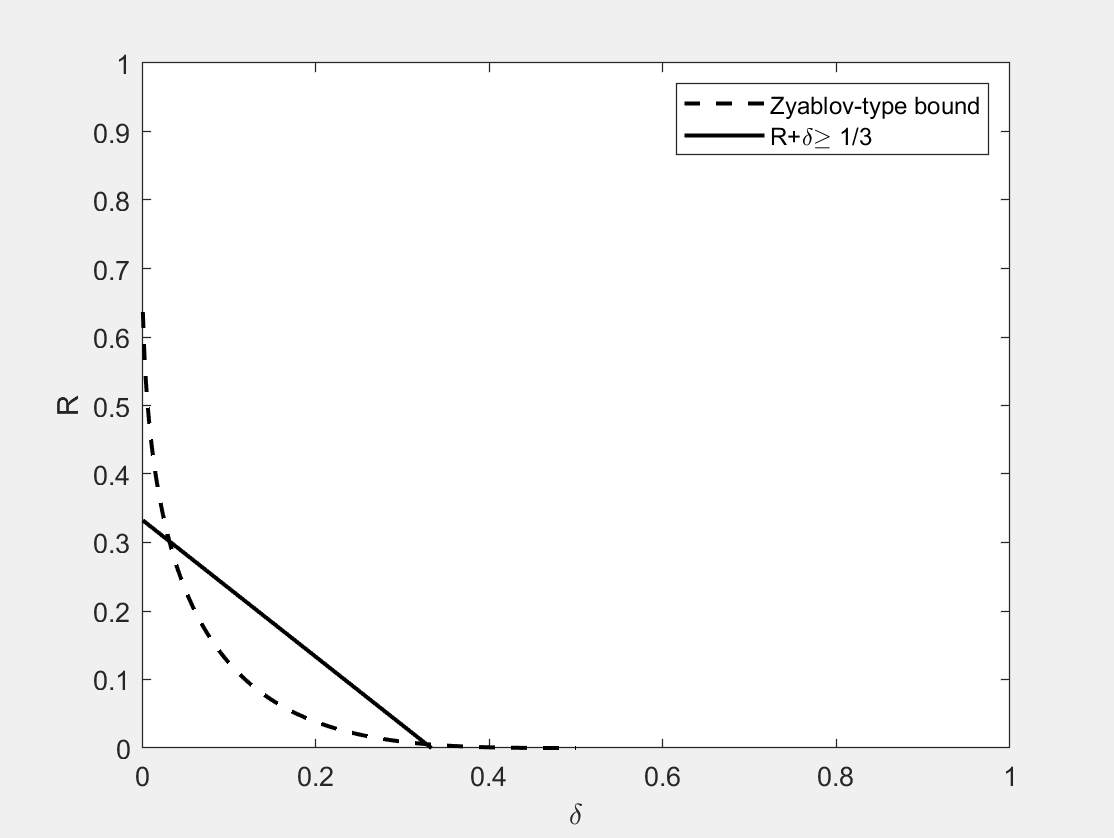}
\end{center}
\caption{Beyond the Zyablov-type bound for locality $3$}~\label{844}
\end{figure}
\end{rmk}

%By concatenating the extended Golay code with parameters $[24,12,8]_2$ and algebraic geometry codes, we obtain another explicit family of \LRCs that beats the Zyablov-type bound.
\begin{prop}\label{prop:3.9}
{\rm Let $\mC_{in}$ be the extended Golay code $[24,12,8]_2$ with locality $7$.
Let $\{\mC_{out}^{(i)}\}_{i=1}^\infty$ be a family of algebraic geometry codes $[n_i,k_i,d_i]$ over $\F_{2^{12}}$ attaining the TVZ bound given in Proposition \ref{prop:2.2}.
Then the concatenated codes  are a family of $[24n_i,12k_i,8d_i]_{2}$-linear locally repairable codes with locality $7$ such that its information rate $R$ and relative  minimum distance $\delta$ satisfy
$R\ge {31}/{63}-3\delta/2.$}
\end{prop}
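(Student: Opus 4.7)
The plan is to mirror the proof of Proposition \ref{prop:3.8}, substituting the extended Golay code for the extended Hamming code and working over $\F_{2^{12}}$ rather than $\F_{2^4}$. First, I would apply Theorem \ref{cc} with $\mC_{in}$ the $[24,12,8]_2$ extended Golay code (which has locality $7$, since the self-dual Golay code is covered by weight-$8$ dual codewords) and $\mC_{out}^{(i)}$ the given algebraic geometry code $[n_i,k_i,d_i]$ over $\F_{2^{12}}$. Since $k_{in}=12$ matches the exponent of the outer alphabet $2^{12}$, concatenation is well-defined and yields a binary $[24n_i,\,12k_i,\,\ge 8d_i;\,7]$-locally repairable code, as claimed.

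Next I would compute the asymptotic parameters. The rate of the concatenated family is
\[
R \;=\; \lim_{i\to\infty} \frac{12k_i}{24n_i} \;=\; \tfrac{1}{2}\cdot\lim_{i\to\infty}\frac{k_i}{n_i},
\]
and the relative minimum distance is
\[
\delta \;=\; \lim_{i\to\infty} \frac{8d_i}{24n_i} \;=\; \tfrac{1}{3}\cdot\lim_{i\to\infty}\frac{d_i}{n_i}.
\]
So writing $R_{\rm out}$ and $\delta_{\rm out}$ for the limiting rate and relative distance of the outer family, we have $R=R_{\rm out}/2$ and $\delta_{\rm out}=3\delta$.

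Now I invoke Proposition \ref{prop:2.2} with $q=2^{12}$, so $\sqrt{q}-1 = 63$. The TVZ bound gives
\[
R_{\rm out} \;\ge\; 1 - \frac{1}{63} - \delta_{\rm out} \;=\; \frac{62}{63} - 3\delta.
\]
Halving yields
\[
R \;\ge\; \frac{31}{63} - \frac{3\delta}{2},
\]
which is exactly the stated bound.

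I don't anticipate any real obstacle: the argument is a routine adaptation of Proposition \ref{prop:3.8}, relying only on the concatenation bound of Theorem \ref{cc} and the TVZ bound of Proposition \ref{prop:2.2}. The only point worth double-checking is that the locality of the extended Golay code is indeed $r=7$, which follows from the fact that the code is self-dual with minimum weight $8$ and its weight-$8$ codewords (the octads of the Steiner system $S(5,8,24)$) cover every coordinate.
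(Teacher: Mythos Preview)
Your proposal is correct and follows essentially the same argument as the paper's own (suppressed) proof: apply Theorem \ref{cc} to obtain the concatenated parameters, then use the TVZ bound over $\F_{2^{12}}$ (with $\sqrt{q}-1=63$) to derive $2R+3\Gd\ge 62/63$, which is exactly your inequality after rearrangement. The only cosmetic difference is that the paper packages the computation as the single limit $2R+3\Gd=\lim_{i\to\infty}(k_i+d_i)/n_i\ge 1-1/A(2^{12})$, whereas you separate $R$ and $\Gd$ before recombining; the content is identical.
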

\iffalse
\begin{proof}
From Theorem \ref{cc},  $\{\phi(\mC_{out}^{(i)})\}_{i=1}^\infty$ are a family of $[24n_i,12k_i,8d_i]_{2^{12}}$-linear codes with locality $7$. Moreover, it is easy to verify that
$$2R+3\delta=\lim_{i\rightarrow \infty} \frac{24k_i+24d_i}{24n_i}\ge \lim_{i\rightarrow \infty}\left(1-\frac{g_i}{n_i}\right)=1-\frac{1}{ A(2^{12})}=1-\frac{1}{64-1}=\frac{62}{63}.$$
\end{proof}
\fi
\begin{rmk}
Proposition \ref{prop:3.9} gives
 an explicit family of \LRCs with locality $7$ that are beyond the Zyablov-type bound in some interval (see  Figure~\ref{24128}).
\begin{figure}[H]\label{24128.png}
\begin{center}
\includegraphics[width=19em]{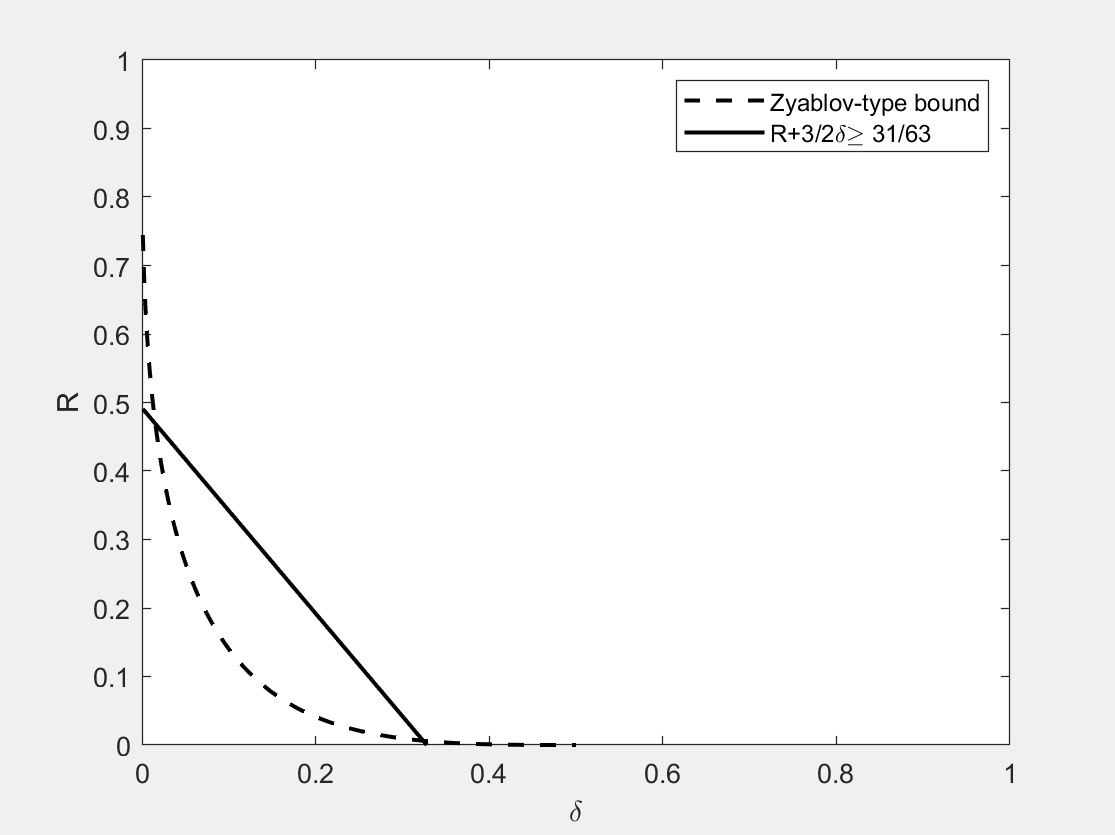}
\end{center}
\vspace{-0.25 cm}
\caption{Beyond the Zyablov-type bound for locality $7$}~\label{24128}
\end{figure}
\end{rmk}

%Next we concatenate a Reed-Solomon code as an inner code and a family of algebraic geometry codes as outer codes to obtain an explicit asymptotic bound for locally repairable codes.
\begin{prop}\label{prop:3.19}
{\rm Let $m\le q$ and $1\le t\le m-1$ be positive integers.
Let $\mC_{in}$ be an $[m,m-t,t+1]_q$ Reed-Solomon code.
Let $\{\mC_{out}^{(i)}\}$ be a family of algebraic geometry codes $[n_i,k_i,d_i]$ over $\F_{q^{m-t}}$ attaining the TVZ bound given in Proposition \ref{prop:2.2}.
Then the concatenated codes are \LRCs  with locality $m-t$ such that its information rate $R$ and relative  minimum distance $\delta$ satisfy
$$R \ge  \left(1-\frac{t}{m}\right) \left(1-\frac{1}{A(q^{m-t})}\right)-\frac{m-t}{t+1}\delta.$$
Furthermore, such codes can be explicitly constructed.}
\end{prop}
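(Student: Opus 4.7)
\textbf{Proof plan for Proposition~\ref{prop:3.19}.} The plan is to apply the concatenation rule (Theorem~\ref{cc}) to a Reed--Solomon inner code and an asymptotically good AG outer code, after first checking that the inner code has the claimed locality. The argument is essentially an exercise in taking limits once the right parameters are in place.

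First I would establish that the inner $[m,m-t,t+1]_q$ Reed--Solomon code has locality $m-t$. Its dual is an $[m,t,m-t+1]_q$ MDS code, so for every coordinate $i\in[m]$ there exists a dual codeword whose support contains $i$ and has weight exactly $m-t+1$ (pick any nonzero codeword supported on an $(m-t+1)$-subset containing $i$; this exists by MDS-ness). By Corollary~\ref{cor:2.7}, the Reed--Solomon inner code is an $[m,m-t,t+1;m-t]_q$-LRC.

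Next I would apply Theorem~\ref{cc} with $\mathcal{C}_{in}$ the above RS code and $\mathcal{C}_{out}^{(i)}$ the AG code $[n_i,k_i,d_i]_{q^{m-t}}$ from Proposition~\ref{prop:2.2}. This immediately gives concatenated codes with parameters $[m n_i,(m-t)k_i,\ge (t+1)d_i;\,m-t]_q$. Writing $R=\lim_{i\to\infty}\tfrac{(m-t)k_i}{m n_i}$ and $\delta=\lim_{i\to\infty}\tfrac{(t+1)d_i}{m n_i}$, and setting $\delta_{\mathrm{out}}=\lim d_i/n_i$, I get
$$
\delta \;=\; \frac{t+1}{m}\,\delta_{\mathrm{out}}, \qquad R \;=\; \frac{m-t}{m}\lim_{i\to\infty}\frac{k_i}{n_i}.
$$
The TVZ bound (Proposition~\ref{prop:2.2}, applied over $\F_{q^{m-t}}$ with $A(q^{m-t})$ in place of $\sqrt{q}-1$) yields $\lim k_i/n_i \ge 1 - 1/A(q^{m-t}) - \delta_{\mathrm{out}}$. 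Substituting $\delta_{\mathrm{out}}=m\delta/(t+1)$ into the expression for $R$ gives
$$
R \;\ge\; \frac{m-t}{m}\left(1-\frac{1}{A(q^{m-t})}-\frac{m\,\delta}{t+1}\right) \;=\; \left(1-\frac{t}{m}\right)\left(1-\frac{1}{A(q^{m-t})}\right)-\frac{m-t}{t+1}\,\delta,
$$
which is the claimed inequality. Explicitness follows since Reed--Solomon codes are explicit and the AG codes attaining the TVZ bound can be explicitly constructed from the Garcia--Stichtenoth tower.

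There is no real obstacle; the only subtlety is verifying the locality of the inner Reed--Solomon code via its dual, and ensuring the TVZ bound can be invoked over $\F_{q^{m-t}}$ (which requires $q^{m-t}$ to be a square of a prime power, or more generally the use of the general Ihara constant $A(q^{m-t})$). Once this is done the estimate is a mechanical substitution.
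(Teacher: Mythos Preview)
Your proposal is correct and follows essentially the same route as the paper: verify that the $[m,m-t,t+1]_q$ Reed--Solomon inner code has locality $m-t$, apply Theorem~\ref{cc} to obtain concatenated $[mn_i,(m-t)k_i,\ge(t+1)d_i;m-t]_q$ codes, and then substitute the TVZ bound for the outer AG family to derive the asymptotic inequality. The only cosmetic difference is that the paper justifies the locality of the inner code via the Lagrange interpolation formula, whereas you argue through the MDS dual (a codeword of weight $m-t+1$ through each coordinate); these are two phrasings of the same fact, and indeed the dual-code argument appears in the paper's draft as an alternative.
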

\begin{proof} Let us consider the $[m,m-t,t+1]_q$ Reed-Solomon code given by
\[\mA:=\{(f(\Ga_1),\dots,f(\Ga_m)):\; f(x)\in\F_q[x]_{<m-t}\},\]
where $\Ga_1,\dots,\Ga_m$ are pairwise distinct elements of $\F_q$. 
From Lagrange interpolation formula, it is clear that the $[m,m-t,t+1]_q$ Reed-Solomon code has locality $m-t$.
From  Theorem \ref{cc},  the concatenated codes have parameters $[mn_i,(m-t)k_i,(t+1)d_i;m-t]_q$. Hence, we have
$(t+1)(m-t)k_i+(m-t)(t+1)d_i\ge  (m-t)(t+1)(k_i+d_i).$
This gives
 $$(t+1)R+(m-t)\delta \ge  \left(1-\frac{t}{m}\right)(t+1)\left(1-\frac{1}{A(q^{m-t})}\right).$$
 The desired result follows.
 
\iffalse
From \cite[Theorem 9.1.6]{LX04}, the dual code of the above Reed-Solomon code is
\[\mA^\perp=\{(v_1f(\Ga_1),\dots,v_mf(\Ga_m)):\; f(x)\in\F_q[x]_{<t}\}\]
for some nonzero elements $v_1,\dots,v_m$ of $\F_q$.
For any position $i\in [n]$, let $I\subseteq [n]\setminus\{i\}$ with $|I|=t-1$. Put $h(x)=\prod_{j\in I}(x-\Ga_j)$. Then $\bc:=(v_1h(\Ga_1),\dots,v_mh(\Ga_m))$ is a codeword of $\mA^\perp$ with $i\in \supp(\bc)$ and $|\supp(\bc)|=m-t+1$. It follows from  Lemma~\ref{lem:2.2} that the $[m,m-t,t+1]_q$ Reed-Solomon code has locality $m-t$.
\fi

 \end{proof}

By taking $m=r+1$ and $t=1$ in Proposition \ref{prop:3.19}, we obtain the following result.
\begin{cor}~\label{cor:3.11}
For any given real number $\Gd\in(0,1)$, there exists an explicit family of \LRCs with rate $R$, locality $r$ and relative minimum distance $\Gd$ satisfying
$$ R \ge  \frac{r}{r+1} \left(1-\frac{1}{A(q^{r})}\right)-\frac{r}{2}\delta.$$
\end{cor}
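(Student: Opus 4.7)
The plan is to derive Corollary \ref{cor:3.11} as an immediate specialization of Proposition \ref{prop:3.19}, so the task is essentially one of verifying that the chosen parameters are admissible and then substituting them into the general bound. First I would take $m=r+1$ and $t=1$ in Proposition \ref{prop:3.19}; this requires $1\le t\le m-1$, which holds whenever $r\ge 1$, and $m\le q$, i.e.\ $r+1\le q$, so the corollary is understood to be stated for $q$ large enough relative to $r$ (otherwise one simply passes to an extension field large enough to house an $[r+1,r,2]_q$ Reed--Solomon code).

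Next I would plug these values into the inner-code construction of Proposition \ref{prop:3.19}: the inner code becomes the $[r+1,r,2]_q$ single-parity-check Reed--Solomon code, which has locality $r$ (each coordinate is the sum of the other $r$), and the outer code is the TVZ-type family of algebraic geometry codes over $\F_{q^r}$ from Proposition \ref{prop:2.2}. Concatenation via Theorem \ref{cc} then yields an explicit family of $q$-ary locally repairable codes of length $(r+1)n_i$, dimension $rk_i$, distance at least $2d_i$ and locality $r$.

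Finally I would substitute $m=r+1$, $t=1$ into the asymptotic inequality
$$R\ge \left(1-\frac{t}{m}\right)\left(1-\frac{1}{A(q^{m-t})}\right)-\frac{m-t}{t+1}\delta,$$
noting that $1-t/m=r/(r+1)$, $A(q^{m-t})=A(q^r)$, and $(m-t)/(t+1)=r/2$. This gives directly
$$R\ge \frac{r}{r+1}\left(1-\frac{1}{A(q^r)}\right)-\frac{r}{2}\delta,$$
which is the desired bound. Since both the Reed--Solomon inner code and the Garcia--Stichtenoth-type AG outer codes admit polynomial-time constructions, the resulting concatenated family is explicit.

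There is no real obstacle here: the only subtlety is to be transparent about the regime of $q$ in which the construction runs (needing $r+1\le q$ to realize the Reed--Solomon inner code over $\F_q$, and $q^r$ being a square when invoking the TVZ bound in Proposition \ref{prop:2.2}), so I would add a brief remark on these mild alphabet assumptions rather than treat them as serious hypotheses.
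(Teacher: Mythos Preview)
Your proposal is correct and follows exactly the paper's approach: the corollary is stated immediately after Proposition \ref{prop:3.19} with the one-line justification ``By taking $m=r+1$ and $t=1$ in Proposition \ref{prop:3.19}, we obtain the following result,'' and your substitution and verification reproduce precisely that derivation. Your added remarks on the mild alphabet hypotheses ($r+1\le q$ and $q^r$ a square for the TVZ bound) are a welcome clarification that the paper leaves implicit.
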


In particular, if $q$ is a prime, then we can obtain the same asymptotic bound of locally repairable codes over a prime finite field  as given in \cite[Theorem 10]{MX20}.
%By Corollary~\ref{cor:3.11}, we obtain the following example.
\iffalse
\begin{ex}
If $t=1$ and $r=2$, for any given real number $\Gd\in(0,1)$, there exists a family of \LRCs with rate $R$, locality $2$ and relative minimum distance $\Gd$ satisfying
$$ R \ge  \frac{2}{3} \left(1-\frac{1}{q-1}\right)-\delta.$$
\begin{figure}[H]
\begin{center}
\includegraphics[width=19em]{fig3-12.png}
\end{center}
\vspace{-0.25 cm}
\caption{A family of \LRCs with locality $2$ from Corollary~\ref{cor:3.11} beyond the Zyablov-type bound for $q=256$}
\end{figure}
\end{ex}

\begin{ex}
If $t=1$ and $r=4$, for any given real number $\Gd\in(0,1)$, there exists a family of \LRCs with rate $R$, locality $4$ and relative minimum distance $\Gd$ satisfying
$$ R \ge  \frac{4}{5} \left(1-\frac{1}{q^2-1}\right)-2\delta.$$
\begin{figure}[H]
\begin{center}
\includegraphics[width=19em]{451.png}
\end{center}
\vspace{-0.25 cm}
\caption{A family of \LRCs with locality $4$ from Corollary~\ref{cor:3.11} beyond the Zyablov-type bound for $q=256$}
\end{figure}
\end{ex}
\fi

\section{Locally repairable codes via lengthening parity-check matrices}\label{sec:4}
In this section, we try to construct good locally repairable codes based on linear codes from a lengthening propagation rule.
In particular, the technique is to endow the locality by adding rows and columns in the parity-check matrix of linear codes.

Let $\mC_0$ be an $[n,k,d]$-linear code over $\F_q$ and let $H_0\in \F_q^{(n-k)\times n}$ be its parity-check matrix.
Let $r$ be a positive integer. There exist integers $m$ and $s$ such that $n=mr+s$ with $1\le s\le r$.
We divide the columns of $H_0$ into $\lceil \frac{n}{r}\rceil$ blocks.
Let ${\bf h}_{ij}$ be the columns of $H_0$ for $1\le i\le m, 1\le j\le r$ and $i=m+1,1\le j\le s$, i.e.,
$$H_0=(\bh_{11}, \cdots,  \bh_{1r},\bh_{21}, \cdots,  \bh_{2r},\cdots, \bh_{m1}, \cdots,  \bh_{mr},  \bh_{m+1,1}, \cdots, \bh_{m+1,s}).$$
Now let us consider the matrix $H$ defined by
\begin{equation}\label{eq:4.11} H:=\left(\begin{array}{cccc|c|cccc|cccc}
1 & \cdots & 1   & 1 & \cdots  & 0 & \cdots & 0 &  0 & 0 & \cdots & 0& 0\\
\vdots&\ddots&\vdots&\vdots & \ddots & \vdots & \ddots & \vdots & \vdots & \vdots & \ddots &\vdots & \vdots\\
 0 & \cdots & 0   &  0 & \cdots  & 1 & \cdots & 1 &  1 & 0 & \cdots & 0 & 0\\
 0 & \cdots & 0   &  0 & \cdots & 0 & \cdots & 0 & 0 & 1 & \cdots & 1 & 1\\
 \bh_{11} & \cdots & \bh_{1r}  & \bo &  \cdots &\bh_{m1} & \cdots & \bh_{mr} & \bo & \bh_{m+1,1} & \cdots & \bh_{m+1,s} & \bo
\end{array}
\right),
\end{equation}
where $\bo$ stands for the zero vector of dimension $n-k$.

\begin{lemma}\label{lem:4.1}
Let $H$ be the matrix defined in Equation \eqref{eq:4.11}. Then any $d-1$ columns of $H$ are linearly independent.
\end{lemma}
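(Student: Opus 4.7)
The plan is to exploit two features built into $H$: the bottom $n-k$ rows of every ``parity column'' (those ending in $\bo$) are zero, and the top $m+1$ rows split cleanly by block into distinct standard basis vectors. These two observations together reduce the independence of $d-1$ columns of $H$ to the independence of $d-1$ columns of $H_0$, which is given by the minimum distance of $\mC_0$.

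First I would sort the columns of $H$ into two types according to block membership. For each block $i\in\{1,\dots,m+1\}$, the ``data columns'' have top part equal to the $i$-th standard basis vector $\be_i$ (living in the appended top $m+1$ rows) and bottom part equal to some $\bh_{ij}$; the single ``parity column'' of block $i$ has top part $\be_i$ and bottom part $\bo$. Now suppose $T$ is any collection of at most $d-1$ columns of $H$ admitting a linear dependence, and partition $T=D\cup P$ into its data and parity columns, so that $|D|+|P|\le d-1$.

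Second, I would read off the bottom $n-k$ coordinates of the dependence. The parity columns in $P$ contribute nothing because their bottom parts are $\bo$, so the relation collapses to $\sum_{(i,j)\in D}\alpha_{ij}\bh_{ij}=\bo$. Since $|D|\le d-1$ and $H_0$ is a parity-check matrix of an $[n,k,d]_q$ code, any $d-1$ of its columns are linearly independent (this is the standard distance characterization, see \cite[Theorem 4.5.6]{LX04}). Hence every $\alpha_{ij}$ with $(i,j)\in D$ vanishes.

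Third, with the data coefficients already known to be zero, the top $m+1$ coordinates of the dependence reduce to $\sum_{i\in P}\beta_i\be_i=\bo$, which is a relation among distinct standard basis vectors and therefore forces $\beta_i=0$ for all $i$. Thus the only linear dependence on $d-1$ columns of $H$ is the trivial one. There is no real obstacle here beyond carefully unpacking the block layout of $H$; the only inequality that needs checking, $|D|\le d-1$, is immediate from $|D|\le|T|$. The argument does not use the particular shape of the final block of size $s$, so it handles the last (possibly shorter) block uniformly.
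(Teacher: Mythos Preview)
Your proof is correct. The key observation—that looking at the bottom $n-k$ rows first kills all data-column coefficients (since $|D|\le d-1$), after which the top $m+1$ rows force the parity-column coefficients to vanish because distinct parity columns carry distinct standard basis vectors—is sound and handles all configurations uniformly.

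The paper's argument is organized differently. It splits into two cases according to whether any parity column is among the chosen $d-1$. In Case~1 (no parity column) it is immediate from the independence of $d-1$ columns of $H_0$. In Case~2 (at least one parity column present) the paper actually proves the stronger assertion that any $d$ such columns are independent: it reads the \emph{top} rows first to obtain $\sum_{j\in S_i}\lambda_{i,j}=0$ for each block, uses this to eliminate the parity-column coefficient, and only then reduces the bottom rows to a relation among at most $d-1$ columns of $H_0$. Your bottom-then-top order avoids the case split entirely and is shorter; moreover, it yields the paper's stronger Case~2 statement for free, since if $|T|=d$ but $|P|\ge 1$ then still $|D|\le d-1$ and your two steps go through unchanged. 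The paper's formulation has the minor advantage that its Case~2 conclusion is stated explicitly and later reused verbatim in the proof of Lemma~\ref{lem:4.5}, but your argument recovers that just as easily.
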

\begin{proof}
Let $\bc_{i,j}$ denote the $j$-th column in the $i$-th block of the parity-check matrix $H$.
Choose arbitrary $d-1$ columns $\{\bc_{i,j}: i\in I,j\in S_i\}$ from $H$, where $S_i$ is a nonempty subset of $\{1,2,\cdots,r+1\}$ with $\sum_{i\in I} |S_i|=d-1$.

{\bf Case 1:} the last column in each block does not belong to $\{\bc_{i,j}: i\in I,j\in S_i\}$, i.e., $r+1\notin S_i$ for $1\le i\le m$ and $s+1\notin S_{m+1}$.
Since $H_0$ is a parity-check matrix of an $[n,k,d]$-linear code, any $d-1$ columns of $H_0$ are linearly independent.
Hence, any $d-1$ columns $\{\bc_{i,j}: i\in I,j\in S_i\}$ are linearly independent from linear algebra.

{\bf Case 2:}  the last column of some blocks are chosen. %contained in $\{\bc_{i,j}: 1\le i\le m+1,j\in S_i\}$.
Without loss of generality, assume that $r+1\in S_{i_0}$ for some $1\le i_0\le m$.
In this case, we can prove a stronger result. That is, arbitrary $d$ columns $\{\bc_{i,j}: i\in I,j\in S_i\}$ are linearly independent. %Here $\sum_{i\in I} |S_i|=d$.
Assume that there exist $\Gl_{i,j}\in \F_q$  such that $$\sum_{i_0\neq i\in I}\sum_{j\in S_i} \Gl_{ij} \bc_{ij}+\sum_{j\in S_{i_0}} \Gl_{i_0,j} \bc_{i_0,j}=\bo.$$
From the first $\lceil\frac{n}{r}\rceil$ rows of $H$, we have $\sum_{j\in S_{i}} \Gl_{i,j}=0$ for $i\in I$.
It follows that
\begin{equation}\label{eq:lem4.1} \Gl_{i_0,r+1}=-\sum_{j\in S_{i_0}\setminus \{r+1\}} \Gl_{i_0,j}.\end{equation}
Then we have
$$\sum_{i_0\neq i\in I} \sum_{j\in S_i}\Gl_{ij} \bc_{ij}+\sum_{j\in S_{i_0}\setminus \{r+1\}} \Gl_{i_0,j} (\bc_{i_0,j}-\bc_{i_0,r+1})=\bo.$$
By focusing on the last $n-k$ rows of $H$, we have $$\sum_{i_0\neq i\in I} \sum_{j\in S_i} \Gl_{ij} \bh_{ij}+\sum_{j\in S_{i_0}\setminus \{r+1\}} \Gl_{i_0,j} \bh_{i_0,j}=\bo.$$
Since any $d-1$ columns of $H_0$ are linearly independent, we have $\Gl_{i,j}=0$ for $i_0\neq i\in I,j\in S_i$ and $i=i_0,j\in S_{i_0}\setminus \{r+1\}$.
From Equation \eqref{eq:lem4.1}, we have $\Gl_{i_0,r+1}=0$. Hence, any $d$ columns are linearly independent in this case.
\end{proof}

\begin{prop}\label{prop:4.2}
Let $\mC$ be the code with $H$ given in Equation \eqref{eq:4.11} as its parity-check matrix. Then $\mC$ is an $[n+\lceil \frac{n}{r}\rceil,k,\ge d; r]$-locally repairable code.
\end{prop}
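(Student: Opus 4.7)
The plan is to verify the four claimed parameters in turn: length, dimension, minimum distance $\ge d$, and locality $r$. Since $n = mr + s$ with $1 \le s \le r$, we have $\lceil n/r\rceil = m+1$, and the blocks of $H$ have sizes $r+1, r+1, \ldots, r+1, s+1$, of total $m(r+1) + (s+1) = n + m + 1 = n + \lceil n/r\rceil$, giving the stated length. The matrix $H$ has $\lceil n/r\rceil + (n-k)$ rows, so if I can show these rows are $\F_q$-linearly independent, the dimension of $\mC$ will be exactly $k$.

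The row-independence is where I would spend the most care. The top $\lceil n/r\rceil$ rows have pairwise disjoint supports (each is the indicator of a single block), so they are independent among themselves. Moreover, each top row has a $1$ in the ``extra'' padding column at the end of its block, whereas the bottom $n-k$ rows have $\bo$ in every padding column. Hence no nontrivial combination of the top rows lies in the span of the bottom rows, and all $\lceil n/r\rceil + (n-k)$ rows of $H$ are independent. For the minimum distance, I would simply invoke Lemma~\ref{lem:4.1}: it guarantees that any $d-1$ columns of $H$ are linearly independent, and the standard parity-check characterization of minimum distance (\cite[Theorem 4.5.6]{LX04}) then gives $d(\mC) \ge d$.

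For the locality, I would use the top $\lceil n/r\rceil$ rows of $H$ directly as elements of $\mC^\perp$. For each position $p$ of $\mC$, $p$ lies in a unique block, and the row of $H$ corresponding to that block is a codeword of $\mC^\perp$ whose support is exactly the set of coordinates of that block — a set of size $r+1$ for the first $m$ blocks or size $s+1 \le r+1$ for the last block, and in either case containing $p$. Lemma~\ref{lem:2.2} then furnishes a recovery set for $p$ of size at most $r+1$, so $\mC$ has locality $r$. I do not anticipate any genuine obstacle: the only slightly delicate step is pinning down the dimension as exactly $k$ rather than merely $\ge k$, which is handled by the padding-column argument above; everything else is bookkeeping on the block structure of $H$.
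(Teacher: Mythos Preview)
Your proposal is correct and follows essentially the same approach as the paper: row-independence of $H$ for the dimension, Lemma~\ref{lem:4.1} for the distance, and the top block-indicator rows (via Lemma~\ref{lem:2.2}/Corollary~\ref{cor:2.7}) for the locality. The only step you leave implicit is that the bottom $n-k$ rows of $H$ are themselves independent---this holds because, restricted to the non-padding columns, they are exactly the rows of the parity-check matrix $H_0$, which has full row rank $n-k$; once that is noted, your padding-column argument cleanly finishes the dimension count.
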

\begin{proof} 
%It is clear that the length of $\mC$ is $n+\lceil \frac{n}{r}\rceil$.
Since $H$ is an $(n-k+\lceil \frac{n}{r}\rceil)\times (n+\lceil \frac{n}{r}\rceil)$ matrix and the rows of $H$ are linearly independent from linear algebra, the dimension of $\mC$ is $(n+\lceil \frac{n}{r}\rceil)-(n-k+\lceil \frac{n}{r}\rceil)=k$.
From Lemma \ref{lem:4.1}, any $d-1$ columns of $H$ are linearly independent. 
Hence, the minimum distance of $\mC$ is at least $d$ from \cite[Theorem 4.5.6]{LX04}.
Moreover, the locality of $\mC$ is $r$ from the parity-check matrix $H$ and Corollary \ref{cor:2.7}.  

%Let $\bc=(c_1,c_2,\cdots, c_{n+\lceil \frac{n}{r}\rceil})$ be any codeword in $\mC$. Then we have $H\cdot \bc^T=0$, i.e., $$\sum_{i=(r+1)j+1}^{(r+1)(j+1)} c_i=0 \text{ for } 0\le j\le m-1 \text{ and } \sum_{i=(r+1)m+1}^{(r+1)m+s} c_i=0.$$ Hence, the locality of $\mC$ is $r$ from the parity-check matrix and Corollary \ref{cor:2.7}.  
\end{proof}

\subsection{Dimension-optimal locally repairable codes}
In order to increase minimum distance of locally repairable codes constructed in Proposition \ref{prop:4.2}, we can employ the additional property of parity-check matrices of extended Hamming codes. In this subsection, we propose a construction of dimension-optimal locally repairable codes via lengthening extended Hamming codes.

\begin{theorem}\label{thm:4.3}
Let $\mC_0$ be a binary $[2^t, 2^t-1-t,4]$-extended Hamming code for $t\ge 3$. Let $H_0$ be a parity-check matrix of $\mC_0$ and let $H$ be the matrix given in Equation \eqref{eq:4.11}. Then, for $r=2$ or $3$, the code $\mC$ with $H$ as a parity-check matrix is a binary  $[2^t+\lceil\frac{2^t}r\rceil, 2^t-1-t,5;r]$-locally repairable code. Furthermore, $\mC$ achieves the upper bound \eqref{eq:AZD19} if $(r=2,t\ge 5)$ or $(r=3,t\ge 5)$, i.e., $\mC$ is dimension-optimal in these cases.
\end{theorem}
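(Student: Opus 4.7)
The plan is to verify three items, using Proposition~\ref{prop:4.2} as the starting point. Length, dimension and locality are immediate: since $\mC_0$ has parity-check matrix $H_0$ of size $(t+1)\times 2^t$, Proposition~\ref{prop:4.2} yields at once an $[2^t+\lceil 2^t/r\rceil,\,2^t-1-t,\,\ge 4;\,r]$-locally repairable code. The real work is to promote $d\ge 4$ to $d\ge 5$ for $r\in\{2,3\}$ and then to match the upper bound \eqref{eq:AZD19} when $t\ge 5$.

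The key step is showing that no four columns of $H$ sum to zero. If one of the four is a ``last column'' of the form $(e_i,\bo)^T$, the argument of Case~2 in the proof of Lemma~\ref{lem:4.1} applies verbatim (with $d=4$ there) and gives independence. Otherwise every chosen column has the form $(e_i,\bh_{i,j})^T$ with $\bh_{i,j}=(1,v_{i,j})^T$ a column of the extended Hamming parity check. For the $\lceil n/r\rceil$ indicator rows to sum to $\bo$, every appearing block index must occur an even number of times; since each block has size $\le r\le 3$, no block can contribute four columns, so the only possibility is that exactly two distinct blocks each contribute two columns. The first (``parity'') bit of each $\bh_{i,j}$ equals $1$, so that bit cancels automatically, and the bottom $t$ rows collapse to the condition $v_a+v_b=v_c+v_d$ with $\{v_a,v_b\}$ and $\{v_c,v_d\}$ lying in two distinct blocks.

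Accordingly I would order the $2^t$ columns of $H_0$ so that within-block pair-sums are pairwise distinct across blocks. For $r=2$ each block is a pair, and one needs a perfect matching of $\F_2^t$ whose edge-sums are pairwise distinct nonzero elements of $\F_2^t$; such a ``rainbow'' matching can be produced from the multiplicative structure of $\F_{2^t}$: fix a primitive $\alpha$, pair $\alpha^{2i}$ with $\alpha^{2i+1}$ for $i=0,\dots,(2^t-3)/2$, and pair $0$ with the leftover element, so that the edge-sums are $\alpha^{2i}(1+\alpha)$ together with one extra value, all pairwise distinct. For $r=3$ each triple $\{v_1,v_2,v_3\}$ contributes the three nonzero elements of the two-dimensional $\F_2$-subspace $\langle v_1+v_2,\,v_1+v_3\rangle$, and we need these two-subspaces to be pairwise disjoint outside the origin; this is a (near-)partial-spread problem in $\F_2^t$, solvable for $t\ge 5$ by the $\F_4$-spread when $t$ is even and by a local modification when $t$ is odd, which also accommodates the residual block of size $s\in\{1,2\}$. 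With any such ordering no weight-$4$ codeword exists, so $d(\mC)\ge 5$.

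Finally, dimension-optimality is a direct comparison with Lemma~\ref{AZD19}, whose upper bound is independent of $d$ once $d\ge 5$. For $r=2$ we have $n=3\cdot 2^{t-1}$ and $rn/(r+1)=2^t$; for $t\ge 5$ one checks that $\log_2(1+rn/2)=\log_2(1+n)$ lies strictly below $rn/((r+1)(r+2))=n/6$, so the bound reads $k\le\lfloor 2^t-\log_2(1+n)\rfloor=2^t-t-1$, matching the dimension $2^t-1-t$ of $\mC$. The analogous arithmetic for $r=3$ with $n=2^t+\lceil 2^t/3\rceil$ yields the same equality for $t\ge 5$. The main obstacle is the $r=3$ part of the minimum-distance argument: the partial-spread count is nearly tight, so a small case analysis on the parity of $t$ and on the residual-block size is unavoidable; the remaining steps are routine verifications.
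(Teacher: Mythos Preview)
Your approach diverges from the paper's in the minimum-distance step, and the divergence matters. The paper argues by cases on how four columns of $H$ distribute among blocks and asserts that ``a similar argument'' handles every distribution; but in subcase~(b) with two columns from each of two blocks and none of them the added $(e_i,\bo)^T$, the block-indicator rows cancel and one is left needing four genuine columns $\bh_{i,j}$ of $H_0$ to be independent. Since $d(\mC_0)=4$ exactly, that is not automatic, and the paper's argument does not close this case. Your remedy---order the columns of $H_0$ so that within-block pair-sums are distinct across blocks---is the right one, and it is not optional: with an arbitrary ordering the claim $d(\mC)=5$ is in general false.

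Your concrete constructions are essentially correct, with one caveat. For $r=2$ you need a perfect matching of $\F_2^t$ with pairwise distinct pair-sums; such matchings exist for all $t\ge 3$, and your $\alpha^{2i}/\alpha^{2i+1}$ pairing works once you check (or repair) that the leftover sum does not collide with one of the $\alpha^{2i}(1+\alpha)$. For $r=3$ the pair-sums of a triple are exactly the nonzero elements of a $2$-plane, so the cleanest realisation is to take the triples to \emph{be} those nonzero parts: for even $t$ a full $2$-spread yields $m=(2^t-1)/3$ triples together with the singleton $\{0\}$; for odd $t\ge 5$ a maximum partial $2$-spread has size $(2^t-2)/3=m$, misses a single nonzero element $u$, and the residual pair $\{0,u\}$ has pair-sum $u$ lying outside every plane. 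The caveat is $t=3$: any two $2$-planes in $\F_2^3$ meet nontrivially, so no ordering avoids a weight-$4$ word and in fact $d(\mC)=4$ there; the first assertion of the theorem is therefore false at $(r,t)=(3,3)$, though the dimension-optimality claims for $t\ge 5$ survive. Your optimality verification is the same computation the paper carries out.
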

\begin{proof} Let us first show that $\mC$ is a $[2^t+\lceil\frac{2^t}r\rceil, 2^t-1-t,5;r]$-locally repairable code. By Proposition \ref{prop:4.2}, it is sufficient to show that any four columns of $H$ are linearly independent. Let us prove this only for the case where $r=3$.

Case 1: all four columns belong to the same block, say the first block. In this case, we have to show that $(1,\bo,\bo)^T$, $(1,\bo,\bh_{11})^T$, $(1,\bo,\bh_{12})^T$ and $(1,\bo,\bh_{13})^T$ are linearly independent. Suppose that
\begin{equation}\label{eq:4.1}
\Gl_0(1,\bo,\bo)^T+\sum_{i=1}^3\Gl_i(1,\bo,\bh_{1i})^T=\bo^T
 \end{equation}
 for some $\Gl_i\in\F_2$ with $0\le i\le 3$. Then we have $\sum_{i=1}^3\Gl_i\bh_{1i}=\bo$. This forces that $\Gl_1=\Gl_2=\Gl_3=0$ from \cite[Theorem 4.5.6]{LX04}, since $\mC_0$ has minimum distance $4$. Thus, it follows from \eqref{eq:4.1} that $\Gl_0=0$.

Case 2: Other possibilities: (a) three columns belong to one block and one column belongs to another block; (b) two columns belong to one block and other two belong to another block; (c) two columns belong to one block and other two columns belong to other two blocks, respectively; (d) four columns belong to four distinct blocks. For these four possibilities, one can use a similar argument to show that they are linearly independent. This proves the first part.

 Now if $r=2$ and $t\ge 5$, we want to show that
 \begin{equation}\label{eq:4.2}
k=2^t-1-t\ge \left\lfloor\frac{rn}{r+1}-\min\left\{\log_2\left(1+\frac{rn}{2}\right),\frac{rn}{(r+1)(r+2)}\right\}\right\rfloor
 \end{equation}
 with $r=2$ and $n=2^t+2^{t-1}$.
To show \eqref{eq:4.2}, it will be sufficient to show the following two inequalities
 \begin{equation}\label{eq:4.3}
2^t-t>\frac{rn}{r+1}-\log_2\left(1+\frac{rn}{2}\right)
 \end{equation}
 and
  \begin{equation}\label{eq:4.4}
2^t-t>\frac{rn}{r+1}-\frac{rn}{(r+1)(r+2)}.
 \end{equation}
 
 Substituting $r=2$ and $n=2^t+2^{t-1}$ into \eqref{eq:4.3} and rewrite it into the following form
 \begin{equation}\label{eq:4.5}
\log_2\left(1+2^t+2^{t-1}\right)>t.
 \end{equation}
 It is clear that \eqref{eq:4.5} holds and hence \eqref{eq:4.3} holds.

 Substituting $r=2$ and $n=2^t+2^{t-1}$ into \eqref{eq:4.4} and rewrite it into the following form
 \begin{equation}\label{eq:4.6}
2^t-t>\frac{r}{r+2}\times n=2^{t-1}+2^{t-2}.
 \end{equation}
 The inequality \eqref{eq:4.6} holds when $t\ge 5$ and hence \eqref{eq:4.4} holds. This completes the proof for the case where $r=2$ and $t\ge 5$.

Next assume that $r=3$ and $t\ge 5$. In this case we want to show that the two inequalities \eqref{eq:4.3}  and \eqref{eq:4.4}
 hold for $r=3$ and $n=2^t+\lceil\frac{2^{t}}{3}\rceil$.

 Substituting $r=3$ and $n=2^t+\left\lceil\frac{2^{t}}{3}\right\rceil$ into \eqref{eq:4.3} and rewrite it into the following form
 \begin{equation}\label{eq:4.7}
\log_2\left(1+\frac32\left(2^t+\left\lceil\frac{2^{t}}{3}\right\rceil\right)\right)>t+\frac34\times \left\lceil\frac{2^{t}}{3}\right\rceil-\frac{2^{t}}{4}.
 \end{equation}
 Note that we have $1+\frac32\left(2^t+\left\lceil\frac{2^{t}}{3}\right\rceil\right)>\frac32\times2^t>\sqrt{2}\times 2^t$. This gives
 \[\log_2\left(1+\frac32\left(2^t+\left\lceil\frac{2^{t}}{3}\right\rceil\right)\right)>t+\frac12=t+\frac34\times \frac{2^{t}+2}{3}-\frac{2^{t}}{4}\ge
 t+\frac34\times \left\lceil\frac{2^{t}}{3}\right\rceil-\frac{2^{t}}{4},\]
 i.e.,  the inequality \eqref{eq:4.7} holds  and hence \eqref{eq:4.3} holds.

 Substituting $r=3$ and $n=2^t+\left\lceil\frac{2^{t}}{3}\right\rceil$  into \eqref{eq:4.4} and rewrite it into the following form
 \begin{equation}\label{eq:4.8}
2^t-t>\frac{r}{r+2}\times n=\frac35\left(2^t+\left\lceil\frac{2^{t}}{3}\right\rceil\right)
 \end{equation}
 The inequality \eqref{eq:4.8} holds when $t\ge 5$ and hence \eqref{eq:4.4} holds. This completes the proof for the case where $r=3$ and $t\ge 5$.
\end{proof}

Theorem \ref{thm:4.3} produces two families of dimension-optimal \LRCs based on our second propagation rule and extended Hamming codes. Combining with the propagation rules given in Lemma \ref{lem:3.4}, we can obtain more dimension-optimal locally repairable codes which are listed in the following table. 
%\begin{ex}{\rm In the following table III, we list some dimension-optimal \LRCs obtained from Theorem \ref{thm:4.3} and the propagation rules given in Lemma \ref{lem:3.4}.
\begin{center}
Table III\\
{ Dimension-optimal LRCs via Lengthening and Propagation Rules in Lemma \ref{lem:3.4}}\\\medskip
%\footnotesize
\begin{tabular}{|c|c|c|c|}\hline \hline
$(r,t)$ & Known optimal LRCs &New optimal LRCs & Propagation rules\\
& {given in Theorem \ref{thm:4.3}}& &\\ \hline
$(2,5)$& $[48,26,5;2]_2$&$[47,25,5;2]_2$&Lemma \ref{lem:3.4}(ii)\\ \hline
$(2,6)$& $[96,57,5;2]_2$&$[95,56,5;2]_2$&Lemma \ref{lem:3.4}(ii)\\ \hline
$(2,7)$& $[192,120,5;2]_2$&$[191,119,5;2]_2$&Lemma \ref{lem:3.4}(ii)\\ \hline
$(2,8)$& $[384,247,5;2]_2$&$[383,246,5;2]_2$&Lemma \ref{lem:3.4}(ii)\\ \hline
$(2,9)$& $[768,502,5;2]_2$&$[767,501,5;2]_2$&Lemma \ref{lem:3.4}(ii)\\ \hline\hline

$(3,5)$& $[43,26,5;3]_2$&$[42,25,5;3]_2$&Lemma \ref{lem:3.4}(ii)\\ \hline
$(3,6)$& $[86,57,5;3]_2$&$[85,56,5;3]_2$&Lemma \ref{lem:3.4}(ii)\\ \hline
$(3,7)$& $[171,120,5;3]_2$&$[170,119,5;3]_2$&Lemma \ref{lem:3.4}(ii)\\ \hline
$(3,8)$& $[342,247,5;3]_2$&$[341,246,5;3]_2$&Lemma \ref{lem:3.4}(ii)\\ \hline
$(3,9)$& $[683,502,5;3]_2$&$[682,501,5;3]_2$&Lemma \ref{lem:3.4}(ii)\\ \hline\hline

$(3,5)$& $[43,26,5;3]_2$&$[44,26,5;3]_2$&Lemma \ref{lem:3.4}(i)\\ \hline
$(3,7)$& $[171,120,5;3]_2$&$[172,120,5;3]_2$&Lemma \ref{lem:3.4}(i)\\ \hline
$(3,9)$& $[683,502,5;3]_2$&$[684,502,5;3]_2$&Lemma \ref{lem:3.4}(i)\\ \hline\hline
\end{tabular}
\end{center}
%\vspace{0.2cm}
%}\end{ex}

\subsection{Singleton-optimal locally repairable codes via lengthening RS codes}
Again, in order to increase the minimum distance of locally repairable codes given in Proposition \ref{prop:4.2}, we can employ the additional property of parity-check matrices with  the Vandermonde structure as given in \cite{J19,XY19}.
In this subsection, let us consider lengthening Reed-Solomon codes. 

Let $r$ be a positive integer and let $n\le q-1$ be a positive integer. There exist integers $m$ and $s$ such that $n=mr+s$ with $1\le s\le r$.
Let $d\ge 2$ be a positive integer.
Let $H_0$ be a $(d-1)\times n$ matrix consisting of columns ${\bf h}_{ij}=(\Ga_{ij},\Ga_{ij}^2,\cdots,\Ga_{ij}^{d-1})^T$ with pairwise distinct $\Ga_{ij}\in \F_q^*$ for $1\le i\le m$, $1\le j\le r$ and $i=m+1,1\le j\le s$.
Any $d-1$ columns of $H_0$ are linearly independent, since the determinant of Vandermonde matrix is nonzero.
Let $H_1$ be a $d\times n$ matrix given as follows:
$$H_1=\left(\begin{array}{cccccccccc}
1 & \cdots & 1   & \cdots  & 1 & \cdots & 1 &   1 & \cdots & 1\\
 \bh_{11} & \cdots & \bh_{1r}  &  \cdots &\bh_{m1} & \cdots & \bh_{mr} &  \bh_{m+1,1} & \cdots & \bh_{m+1,s}
\end{array}
\right).$$
It is easy to see that $H_1$ is a Vandermonde matrix which is a generator matrix of some Reed-Solomon code.
Let $\mC_1$ be the code with $H_1$ as its parity-check matrix. Then $\mC_1$ is an $[n, n-d,d+1]$ MDS code.
We split the first row of $H_1$ into $\lceil \frac{n}{r}\rceil$ rows and add a new column in each block of $H_1$ as follows:
%Now let us consider the matrix $H$ defined by
\begin{equation}\label{eq:lem4.5} H=\left(\begin{array}{cccc|c|cccc|cccc}
1 & \cdots & 1   & 1 & \cdots  & 0 & \cdots & 0 &  0 & 0 & \cdots & 0& 0\\
\vdots&\ddots&\vdots&\vdots & \ddots & \vdots & \ddots & \vdots & \vdots & \vdots & \ddots &\vdots & \vdots\\
 0 & \cdots & 0   &  0 & \cdots  & 1 & \cdots & 1 &  1 & 0 & \cdots & 0 & 0\\
 0 & \cdots & 0   &  0 & \cdots & 0 & \cdots & 0 & 0 & 1 & \cdots & 1 & 1\\
 \bh_{11} & \cdots & \bh_{1r}  & \bo &  \cdots &\bh_{m1} & \cdots & \bh_{mr} & \bo & \bh_{m+1,1} & \cdots & \bh_{m+1,s} & \bo
\end{array}
\right),
\end{equation}
where $\bo$ stands for the zero vector of dimension $d-1$.

\begin{lemma}\label{lem:4.5}
Any $d$ columns of $H$ given in Equation \eqref{eq:lem4.5} are linearly independent.
\end{lemma}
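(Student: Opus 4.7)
The plan is to parallel the case analysis in the proof of Lemma~\ref{lem:4.1}, but to exploit the Vandermonde structure of the bottom $d-1$ rows of $H$ so as to upgrade ``any $d-1$ columns of $H$ are independent'' to ``any $d$ columns of $H$ are independent''. Fix an arbitrary set of $d$ columns $\{\bc_{i,j}: i\in I, j\in S_i\}$ of $H$ and suppose a linear dependence $\sum_{i\in I}\sum_{j\in S_i} \Gl_{ij}\bc_{ij}=\bo$ holds with $\Gl_{ij}\in \F_q$. I split the analysis according to whether any of the new ``last-in-block'' columns (those with $\bo$ at the bottom, i.e.\ $\bc_{i,r+1}$ for $1\le i\le m$, or $\bc_{m+1,s+1}$) appear among the chosen ones.

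In the first case I assume none of the chosen columns has zero bottom. Each $\bc_{ij}$ then carries exactly one $1$ in the top $\lceil n/r\rceil$ block-indicator rows together with $\bh_{ij}=(\Ga_{ij},\Ga_{ij}^2,\dots,\Ga_{ij}^{d-1})^T$ at the bottom. Summing the top block-indicator rows collapses them into the single scalar equation $\sum_{i,j}\Gl_{ij}=0$, while the bottom $d-1$ rows read $\sum_{i,j}\Gl_{ij}\bh_{ij}=\bo$. Stacking these yields $\sum_{i,j}\Gl_{ij}(1,\Ga_{ij},\dots,\Ga_{ij}^{d-1})^T=\bo$, which is a dependence among $d$ distinct columns of a $d\times d$ Vandermonde matrix built from the pairwise distinct evaluation points $\Ga_{ij}\in\F_q^*$. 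Nonsingularity of this Vandermonde matrix forces every $\Gl_{ij}=0$.

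In the second case, suppose $k\ge 1$ of the chosen $d$ columns are zero-bottomed and the remaining $d-k$ are original columns of $H_0$. Restricting the dependence to the last $d-1$ rows gives $\sum_{\text{old}}\Gl_{ij}\bh_{ij}=\bo$. Since $H_0$ is Vandermonde with pairwise distinct $\Ga_{ij}\in\F_q^*$, any at most $d-1$ of its columns are linearly independent; this forces every $\Gl_{ij}$ attached to an old column to vanish (trivially so if $d-k=0$). The top block-indicator row corresponding to each block with a chosen zero-bottomed column then reads $\sum_{j\in S_i}\Gl_{ij}=0$, and since the old-column coefficients in that block are already $0$, we conclude $\Gl_{i,r+1}=0$ too, exactly as in the proof of Lemma~\ref{lem:4.1}.

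The main obstacle is the bookkeeping in the first case: one must verify that summing all block-indicator rows really does yield a \emph{global} all-ones constraint supported on the chosen columns (which uses crucially that the coordinates are partitioned into the blocks and that no zero-bottomed columns are selected), and then recognise the resulting $d$ augmented columns as a standard $d\times d$ Vandermonde matrix in distinct evaluation points. The second case is essentially an adaptation of Lemma~\ref{lem:4.1}(Case~2), with Vandermonde independence of columns of $H_0$ replacing the general minimum-distance assumption used there.
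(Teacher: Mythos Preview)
Your proof is correct and follows essentially the same approach as the paper's. In Case~1 you recover the paper's Vandermonde-in-$d$-distinct-points argument by summing the block-indicator constraints and stacking them on top of the bottom $d-1$ rows; in Case~2 your direct observation that the bottom $d-1$ rows involve at most $d-1$ columns of $H_0$ (hence all old-column coefficients vanish, then the block-indicator rows kill the zero-bottom coefficients) is precisely the content of the paper's reference back to Lemma~\ref{lem:4.1}, Case~2, stated a bit more cleanly.
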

\begin{proof}
Let $\bc_{i,j}$ be the $j$-th column in the $i$-th block of the parity-check matrix $H$.
Choose any $d$ columns $\{\bc_{i,j}: i\in I,j\in S_i\}$ from $H$, where $S_i$ is a nonempty subset of $\{1,2,\cdots,r+1\}$ with $\sum_{i\in I} |S_i|=d$.

{\bf Case 1:} the last column in each block does not belong to $\{\bc_{i,j}: i\in I,j\in S_i\}$, i.e., $r+1\notin S_i$ for $1\le i\le m$ and $s+1\notin S_{m+1}$.
Assume that there exist  $\Gl_{i,j}\in \F_q$ such that $\sum_{i\in I}\sum_{j\in S_i} \Gl_{ij} \bc_{i,j}=\bo.$
It follows that $\sum_{j\in S_i} \Gl_{i,j}=0$  for all $i\in I$ from the parity-check matrix $H$, which imply that $\sum_{i\in I}\sum_{j\in S_i} \Gl_{i,j}=0$.
Thus, we have the following system of linear equations
$$\sum_{i\in I, j\in S_i}  \Ga_{ij}^t \Gl_{i,j} =0  \text{ for } 0\le t \le d-1.$$
The coefficient matrix of the above system of linear equations is a Vandermonde matrix in the variables $\{\Ga_{i,j}: i\in I,j\in S_i\}$. Since $\Ga_{i,j}$ are pairwise distinct, the determinant of the coefficient matrix is nonzero. Hence, such an system of linear equations has a unique solution $\Gl_{i,j}=0$ for $i\in I$ and $j\in S_i$.

{\bf Case 2:}  the last column of some blocks are contained in $\{\bc_{i,j}: 1\le i\le m+1,j\in S_i\}$.
We have already proved that any $d$ columns of $H$ are linearly independent for this case in Lemma \ref{lem:4.1}. So we omit the details here.
\end{proof}

\begin{prop}\label{prop:4.6}
Let $\mC$ be the code with $H$ given in Equation  \eqref{eq:lem4.5} as its parity-check matrix. Then $\mC$ is an $[n+\lceil \frac{n}{r}\rceil,n-d+1,\ge d+1;r]$-locally repairable code.
\end{prop}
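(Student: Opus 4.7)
The plan is to mirror the argument for Proposition \ref{prop:4.2}, using Lemma \ref{lem:4.5} in place of Lemma \ref{lem:4.1} to pick up the extra unit of minimum distance coming from the Vandermonde structure. First I would read off the format of $H$: it has $\lceil n/r\rceil + (d-1)$ rows (the original first row of $H_1$ has been split into one row per block) and $n+\lceil n/r\rceil$ columns (one appended column per block). If $H$ has full row rank, the dimension of $\mC$ is $(n+\lceil n/r\rceil)-(\lceil n/r\rceil+d-1)=n-d+1$, exactly as claimed.

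Next I would verify that $H$ has full row rank. Suppose a linear combination of rows equals $\bo$. Restricting to the appended column of the $i$-th block, only the $i$-th of the top $\lceil n/r\rceil$ rows contributes a nonzero entry (the bottom rows have $\bo$ on every appended column), so the coefficient of that top row must be $0$. Once the top rows are eliminated, what remains is a linear combination of the $d-1$ rows of $H_0$; these rows are linearly independent because any $d-1$ columns of $H_0$ form an (essentially) Vandermonde matrix with pairwise distinct nodes $\Ga_{ij}\in\F_q^*$, so some $(d-1)\times(d-1)$ minor is nonzero. Hence all coefficients vanish and $H$ has full row rank.

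For the minimum distance, I would simply invoke Lemma \ref{lem:4.5}, which already shows that any $d$ columns of $H$ are linearly independent. By \cite[Theorem 4.5.6]{LX04} this forces $d(\mC)\ge d+1$. For the locality, the top $\lceil n/r\rceil$ rows of $H$ are elements of $\mC^\perp$ whose supports are precisely the blocks, each of size $r+1$ (or $s+1\le r+1$ for the last block), and these blocks cover $[n+\lceil n/r\rceil]$. Corollary \ref{cor:2.7} then gives locality $r$.

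I do not expect a real obstacle here: the construction is engineered so that the top band of $H$ both provides the recovery sets and decouples cleanly from the bottom Vandermonde band. The only step demanding any care is the rank count, and it is handled by the appended-column trick above; everything else is either bookkeeping or a direct appeal to Lemma \ref{lem:4.5} and Corollary \ref{cor:2.7}.
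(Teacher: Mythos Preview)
Your proposal is correct and follows essentially the same route as the paper: read off the shape of $H$, check full row rank, invoke Lemma~\ref{lem:4.5} with \cite[Theorem 4.5.6]{LX04} for $d(\mC)\ge d+1$, and use the top block of rows with Corollary~\ref{cor:2.7} for locality $r$. The only difference is that you spell out the full-row-rank argument via the appended columns and a Vandermonde minor, whereas the paper simply asserts it ``from linear algebra.''
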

\begin{proof}
The parity-check matrix $H$ is a $(\lceil \frac{n}{r}\rceil+d-1) \times (n+\lceil \frac{n}{r}\rceil)$ matrix and the rows of $H$ are linearly independent from linear algebra. Hence, the dimension of $\mC$ is $n-d+1$.
From Lemma \ref{lem:4.5} and \cite[Theorem 4.5.6]{LX04}, the minimum distance of $\mC$ is at least $d+1$.  Finally, the locality of $\mC$ is $r$ from the parity-check matrix $H$ and Corollary \ref{cor:2.7}.
\end{proof}

\begin{theorem}\label{thm:4.7}
Let $r$ be a positive integer. Let $n\le q-1$ be a positive integer such that $n=mr+s$ with $1\le s\le r$ and let $d\le s$ be a positive integer. Then there exists a Singleton-optimal $[n+\lceil \frac{n}{r}\rceil,n-d+1,d+1; r]_q$-locally repairable code.
\end{theorem}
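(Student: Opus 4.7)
The plan is to invoke Proposition~\ref{prop:4.6} to obtain the existence of a code with the claimed parameters, and then verify, via a short arithmetic check, that these parameters saturate the Singleton-type bound \eqref{Singletonbound} under the hypothesis $d\le s$.

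First I would apply Proposition~\ref{prop:4.6} directly. The hypothesis $n\le q-1$ allows us to pick $n$ pairwise distinct nonzero elements $\alpha_{ij}\in\F_q^*$, so the parity-check matrix $H$ of Equation~\eqref{eq:lem4.5} is well defined, and Proposition~\ref{prop:4.6} yields a locally repairable code $\mC$ of length $N=n+\lceil n/r\rceil$, dimension $K=n-d+1$, minimum distance $D\ge d+1$, and locality $r$. This already supplies the lower bound on the distance; what remains is to show that this lower bound coincides with the Singleton-type upper bound on $D$.

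Next I would compute the right-hand side of \eqref{Singletonbound} with the parameters above, using the hypotheses $n=mr+s$, $1\le s\le r$, and $d\le s$. Since $1\le s\le r$, we have $\lceil n/r\rceil=m+1$. Moreover, writing $K=n-d+1=mr+(s-d+1)$, the hypothesis $d\le s$ gives $1\le s-d+1\le s\le r$, so $\lceil K/r\rceil=m+1$ as well. Substituting,
\[
N-K-\left\lceil K/r\right\rceil+2 \;=\; (n+m+1)-(n-d+1)-(m+1)+2 \;=\; d+1.
\]
Therefore every $[N,K,D;r]_q$-locally repairable code satisfies $D\le d+1$. Combined with $D\ge d+1$ from Proposition~\ref{prop:4.6}, this forces $D=d+1$, so $\mC$ is Singleton-optimal.

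There is no serious obstacle: the essential combinatorial work (showing $D\ge d+1$ via the Vandermonde structure) was already done in Lemma~\ref{lem:4.5}, and the optimality reduces to a one-line verification. The only subtle point to highlight is the role of the hypothesis $d\le s$: it is exactly what guarantees that reducing the dimension from $n$ to $n-d+1$ does not decrease the ceiling $\lceil K/r\rceil$, which in turn is what makes the $\lceil n/r\rceil$ new coordinates added by the lengthening contribute only a single unit of slack, matched exactly by the increase from $d$ to $d+1$ in the minimum distance.
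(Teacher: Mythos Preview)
Your proposal is correct and follows essentially the same approach as the paper: invoke Proposition~\ref{prop:4.6} to obtain an $[n+\lceil n/r\rceil,\,n-d+1,\,\ge d+1;\,r]_q$ code, then verify that the Singleton-type bound evaluates to exactly $d+1$ under $d\le s$. Your explicit computation of both ceilings as $m+1$ is a slightly more detailed rendering of the same arithmetic the paper carries out.
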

\begin{proof}
From Proposition \ref{prop:4.6}, we have
\begin{align*}
n+\left\lceil \frac{n}{r} \right\rceil-(n-d+1)-\left\lceil \frac{n-d+1}{r} \right\rceil+2 &=d+1+\left\lceil \frac{n}{r} \right\rceil-\left\lceil \frac{n-d+1}{r} \right\rceil\\
&=d+1+\left\lceil \frac{mr+s}{r} \right\rceil-\left\lceil \frac{mr+s-d+1}{r} \right\rceil\\ &=d+1\le d(\mC).\\
\end{align*}
%Therefore, the minimum distance of $\mC$ achieves the Singleton-type bound of locally repairable codes and 
Hence, $\mC$ is a Singleton-optimal $[n+\lceil \frac{n}{r}\rceil,n-d+1,d+1;r]_q$-locally repairable code.
\end{proof}

\begin{ex}
{\rm Let $q=2^6=64$, $n=63$ and $r=11$. It is clear that $63=11\times 5+8$, i.e., $s=8$.
Choose $d=8$. From Theorem \ref{thm:4.7}, there exists a Singleton-optimal $[69,56,9;11]_{64}$-locally repairable codes with locality $r=11$.}
\end{ex}
\iffalse
\begin{ex}
{\rm Let $q=3^4=81$, $n=80$ and $r=9$.  It is clear that $80=9\times 8+8$, i.e., $s=8$. Choose $d=8$. From Theorem \ref{thm:4.7}, there exists a Singleton-optimal  $[89,73,9;9]_{81}$-locally repairable codes with locality $r=9$.}
\end{ex}
\fi

\begin{rmk}
From Theorem \ref{thm:4.7}, we can explicitly construct a new family of Singleton-optimal locally repairable codes with length up to $q-1+\lceil\frac{q-1}{r}\rceil$, which is larger than $q$ compared with \cite{CXHF18,JMX20,TB14}.  %However, if $d$ is small, then there exists Singleton-optimal locally repairable codes with larger length \cite{CMST20,GXY19,J19,LXY19,XY19}. 
If the minimum distance $d$ is large and linearly proportional to $q$, then $r$ is linear proportional to $q-1$ as well, i.e., $\lceil\frac{q-1}{r}\rceil$ is a constant with respect to large $q$. 
In this case, the length of such Singleton-optimal locally repairable codes is less than the $q+2\sqrt{q}$ which can be obtained from elliptic curves \cite{LMX19,MX21}.
\end{rmk}

\subsection{Asymptotic bounds of locally repairable codes}
In \cite{MX20}, the authors gave a very technical and complicated method to obtain an asymptotic bound exceeding the Gilbert-Varshamov bound for locally repairable codes via local expansions of carefully chosen functions in the Garcia-Stichtenoth tower. In this subsection, we provide a much simpler proof for such an asymptotic bound given in \cite[Theorem 5]{MX20} via lengthening  algebraic geometry codes.

Let $r$ be a fixed positive integer.
Let $\{C_i\}_{i=1}^{\infty}$ be a family of $q$-ary $[n_i,k_i,d_i]$-linear codes with information rate and relative minimum distance $R_1=\lim_{i\rightarrow \infty} \frac{k_i}{n_i}$ and $\delta_1=\lim_{i\rightarrow \infty} \frac{d_i}{n_i}.$
%Assume that there exists a non-increasing function $\Ga_q$ such that $R_1\ge \Ga_q(\delta_1)$.
From Proposition \ref{prop:4.2}, there exists a family of $[n_i+\lceil \frac{n_i}{r}\rceil, k_i,\ge d_i;r]$-locally repairable codes.
Then the information rate of this family of locally repairable codes is
$$R=\lim_{i\rightarrow \infty} \frac{k_i}{n_i+\lceil \frac{n_i}{r}\rceil}=\lim_{i\rightarrow \infty} \frac{k_i}{n_i} \times \frac{n_i}{n_i+\lceil \frac{n_i}{r}\rceil}= \frac{r}{r+1}R_1$$
and its relative minimum distance is $$\delta=\lim_{i\rightarrow \infty} \frac{d_i}{n_i+\lceil \frac{n_i}{r}\rceil}=\lim_{i\rightarrow \infty} \frac{d_i}{n_i}\times \frac{n_i}{n_i+\lceil \frac{n_i}{r}\rceil}= \frac{r}{r+1}\delta_1.$$
%Hence, we obtain the following asymptotic bound
%\begin{equation}\label{eq:13}
%R\ge \frac{r}{r+1} \Ga_q\left(\frac{r+1}{r}\delta\right).
%\end{equation}

Combining with the classical Gilbert-Varshamov bound \cite[Proposition 8.4.4]{St09},
we have the following result.
\begin{prop}\label{prop:4.17}
Let $H_q(x)$ be the $q$-ary entropy function $H_q(x):=x\log_q(q-1)-x\log_q(x)-(1-x)\log_q(1-x).$ Then there exists a family of $q$-ary locally repairable codes with locality $r$ whose information rate $R$ and relative minimum distance
$\delta$ satisfy
$$R\ge \frac{r}{r+1}  \left(1-H_q\Big{(}\frac{r+1}{r}\delta\Big{)}\right).$$
\end{prop}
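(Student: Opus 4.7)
The plan is to simply combine the rate/distance accounting for the lengthening construction of Proposition~\ref{prop:4.2} with the classical asymptotic Gilbert--Varshamov bound for linear codes over $\F_q$. No new combinatorial argument is needed, since the ingredients are already in place just before the statement.

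First, I would invoke the classical Gilbert--Varshamov bound (e.g.\ \cite[Proposition 8.4.4]{St09}): for any $\delta_1 \in (0, 1-q^{-1})$, there exists a family $\{\mC_i\}_{i=1}^\infty$ of $q$-ary $[n_i,k_i,d_i]$-linear codes with
\[
R_1 = \lim_{i\to\infty} \frac{k_i}{n_i} \ge 1 - H_q(\delta_1) \quad\text{and}\quad \delta_1 = \lim_{i\to\infty} \frac{d_i}{n_i}.
\]

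Next, I would apply the lengthening construction of Proposition~\ref{prop:4.2} to each $\mC_i$ to obtain a family of $q$-ary $[n_i+\lceil n_i/r \rceil, k_i, \ge d_i; r]$-locally repairable codes. As computed in the paragraph preceding the proposition, the resulting asymptotic rate and relative minimum distance are
\[
R = \frac{r}{r+1} R_1 \quad\text{and}\quad \delta = \frac{r}{r+1} \delta_1.
\]

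Finally, I would solve the second relation for $\delta_1 = \frac{r+1}{r}\delta$ and substitute into the Gilbert--Varshamov inequality for $R_1$:
\[
R = \frac{r}{r+1} R_1 \ge \frac{r}{r+1}\bigl(1 - H_q(\delta_1)\bigr) = \frac{r}{r+1}\left(1 - H_q\!\left(\tfrac{r+1}{r}\delta\right)\right),
\]
which is the desired bound. The only point that needs a word of care is the admissible range: we need $\frac{r+1}{r}\delta \le 1 - q^{-1}$ so that the GV bound applies at the point $\delta_1$, which imposes the natural restriction $\delta \le \frac{r}{r+1}(1-q^{-1})$. There is no substantive obstacle here; the argument is essentially a bookkeeping step once Proposition~\ref{prop:4.2} is available.
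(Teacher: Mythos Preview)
Your proposal is correct and follows essentially the same approach as the paper: apply the lengthening construction of Proposition~\ref{prop:4.2} to a family of linear codes meeting the classical Gilbert--Varshamov bound, then use the already-established identities $R=\frac{r}{r+1}R_1$ and $\delta=\frac{r}{r+1}\delta_1$ to rewrite the GV inequality in terms of $R$ and $\delta$. Your added remark on the admissible range $\delta\le \frac{r}{r+1}(1-q^{-1})$ is a welcome clarification that the paper leaves implicit.
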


Combining with the TVZ bound of algebraic geometry codes \cite[Proposition 8.4.7]{St09}, we have 
$$\frac{r+1}{r} R\ge  1-\frac{1}{A(q)}-\frac{r+1}{r}\delta.$$
Hence, we have shown the following result which is the same as \cite[Theorem 5]{MX20}.
\begin{theorem}\label{thm:4.11}
Let $q$ be a prime power and let $A(q)$ be the Ihara's constant.
Then there exists a family of $q$-ary locally repairable codes with locality $r$ whose information rate $R$ and relative minimum distance
$\delta$ satisfy
\begin{equation}\label{eq:14}
R\ge \frac{r}{r+1}-\frac{1}{A(q)}\times \frac{r}{r+1}-\delta.
\end{equation}
\end{theorem}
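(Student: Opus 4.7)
The plan is to chain two ingredients already in hand: the lengthening construction from Proposition \ref{prop:4.2} and the Tsfasman--Vl\u{a}du\c{t}--Zink bound for algebraic geometry codes. The rate- and distance-loss incurred by the lengthening is exactly a factor of $\frac{r}{r+1}$, and after reorganising this matches the asserted inequality.

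First I would start from a family $\{C_i\}$ of $q$-ary algebraic geometry codes with parameters $[n_i,k_i,d_i]$ whose limiting rate $R_1 = \lim k_i/n_i$ and relative distance $\delta_1 = \lim d_i/n_i$ satisfy
\[
R_1 + \delta_1 \;\ge\; 1 - \frac{1}{A(q)},
\]
as guaranteed by the TVZ bound (Proposition \ref{prop:2.2}, and more generally by $\ell(G)\ge \deg(G)-g+1$ applied along any sequence of function fields with $N(F_i)/g(F_i)\to A(q)$). By choosing the divisor $G_i$ of degree approximately $(1-\delta_1)n_i$, one can make $\delta_1$ take any prescribed value in $(0, 1-1/A(q))$.

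Next I would apply the lengthening rule of Proposition \ref{prop:4.2} to each $C_i$, producing a $q$-ary $[n_i+\lceil n_i/r\rceil,\, k_i,\, \ge d_i;\, r]$-locally repairable code. The asymptotic effect is transparent: the length becomes $\tfrac{r+1}{r}n_i(1+o(1))$ while $k_i$ and $d_i$ are preserved, so the limiting rate and relative distance of the new family are
\[
R \;=\; \frac{r}{r+1}R_1, \qquad \delta \;\ge\; \frac{r}{r+1}\delta_1,
\]
as was already computed in the paragraph preceding Proposition \ref{prop:4.17}. Adding these and invoking the TVZ lower bound on $R_1+\delta_1$ gives
\[
R + \delta \;\ge\; \frac{r}{r+1}(R_1+\delta_1) \;\ge\; \frac{r}{r+1}\left(1 - \frac{1}{A(q)}\right),
\]
which, after subtracting $\delta$ from both sides, is exactly the inequality \eqref{eq:14}.

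The only subtlety worth pausing over is that the bound \eqref{eq:14} must hold for every admissible $\delta$, not just for one value. This is handled by the flexibility in choosing $\delta_1$ on the TVZ line noted above: for each target $\delta \in \bigl(0, \tfrac{r}{r+1}(1-1/A(q))\bigr)$ one selects $\delta_1 = \tfrac{r+1}{r}\delta$ and runs the argument. No other obstacle arises, which is precisely the point of the section: what was obtained in \cite{MX20} through an intricate analysis of local expansions in the Garcia--Stichtenoth tower now reduces to a one-line composition of the lengthening propagation rule with an off-the-shelf TVZ family.
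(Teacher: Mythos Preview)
Your proposal is correct and follows essentially the same route as the paper: apply the lengthening rule of Proposition~\ref{prop:4.2} to a TVZ family of algebraic geometry codes, observe that both rate and relative distance are scaled by $\tfrac{r}{r+1}$, and combine with $R_1+\delta_1\ge 1-1/A(q)$ to obtain \eqref{eq:14}. The paper's own proof is the same two-step composition, with the rate/distance scaling already recorded in the paragraph preceding Proposition~\ref{prop:4.17}.
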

\iffalse
\begin{proof}
From the TVZ bound of algebraic geometry codes \cite[Proposition 8.4.7]{St09}, we have 
$$\frac{r+1}{r} R\ge  1-\frac{1}{A(q)}-\frac{r+1}{r}\delta.$$
This completes the proof.
\end{proof}
\fi

\begin{figure}
\centering
\includegraphics[width=3.5in]{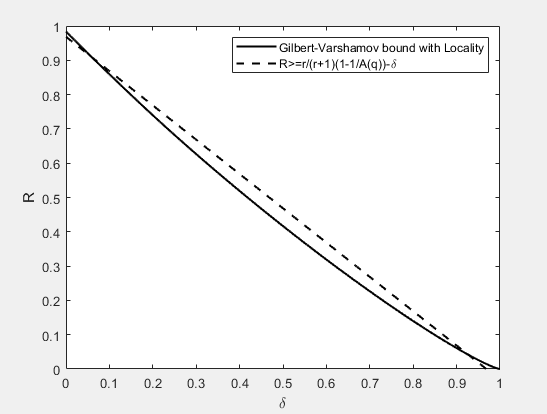}
\caption{$q=2^{12},r=61$}\label{Fig:4}
\end{figure}

The Figure \ref{Fig:4} shows that the bound given in Theorem \ref{thm:4.11} can exceed the asymptotic Gilbert-Varshamov bound of locally repairable codes for $q=2^{12}, r=61$.
Hence, we greatly simplify the bound given in \cite[Theorem 5]{MX20} that breaks the asymptotic Gilbert-Varshamov bound for locally repairable codes.

\iffalse
In particular, we can obtain the following results as \cite[Corollary 6]{MX20}:
\begin{itemize}
\item[(i)] If $q$ is a square, then there exists an explicit family of $q$-ary locally repairable codes with locality $r$ whose information rate $R$ and relative minimum distance $\delta$ satisfy
$$R\ge \frac{r}{r+1}-\frac{1}{\sqrt{q}-1}\times \frac{r}{r+1}-\delta.$$
\item[(ii)]  If $q$ is an odd power of prime, i.e., $q=p^{2m+1}$ with $m\in \mathbb{Z}_+$,  then there exists an explicit family of $q$-ary locally repairable codes with locality $r$ whose information rate $R$ and relative  minimum distance $\delta$ satisfy
$$R\ge \frac{r}{r+1}-\frac{1}{2}\left(\frac{1}{p^m-1}+\frac{1}{p^{m+1}-1}\right) \frac{r}{r+1}-\delta.$$
\end{itemize}
Let $q$ be a square. If locality $r$ lies in the range $[c\log_{2} q, +\infty)$ for any constant $c>1$, then the bound given in Theorem \ref{thm:4.11} exceeds the asymptotic Gilbert-Varshamov bound of locally repairable codes over $\F_q$ for sufficiently large $q$ from \cite[Proposition 7]{MX20}. For example, the Figure \ref{Fig:6} shows that the bound given in Theorem \ref{thm:4.11} can exceed the asymptotic Gilbert-Varshamov bound of locally repairable codes for $q=2^{12}, r=61$.
\fi

\end{document}